\newcommand{\method}{PAPER}
\newtheorem{lemma}{Lemma}
\renewenvironment{proof}[1][\proofname]{%
  \par\pushQED{\qed}%
  \normalfont\topsep6\p@\@plus6\p@\relax
  \trivlist
  \item[\hskip\labelsep\bfseries\itshape #1\@addpunct{.}]%
}{%
  \popQED\endtrivlist\@endpefalse
}
\newcommand{\ceil}[1]{\left\lceil #1 \right\rceil}
\newcommand{\round}[1]{\left\lfloor #1 \right\rceil}
\newcolumntype{C}[1]{>{\hspace{-3pt}\centering\arraybackslash}p{#1}<{\hspace{-3pt}}}
\newcommand*\circled[1]{\tikz[baseline=(char.base)]{
            \node[shape=circle,fill=black,text=white,inner sep=1.5pt] (char) {#1};}}
\newcommand{\nodebox}[3][black]{
\tikz[baseline=(n.base)]\node[
  draw,
  fill=#2,
  text=#1,
  rounded corners=2pt,
  inner sep=1.5pt,
] (n) {\textsf{#3}};
}
\definecolor{BNcolor}{RGB}{255,242,204}
\definecolor{Convcolor}{RGB}{248,206,204}
\definecolor{Polycolor}{RGB}{218,232,252}
\definecolor{Bivarcolor}{RGB}{225,213,231}
\definecolor{Addcolor}{RGB}{213,232,212}
\newcommand{\BN}{\nodebox[black]{BNcolor}{B}}
\newcommand{\Conv}{\nodebox[black]{Convcolor}{C}}
\newcommand{\Poly}{\nodebox[black]{Polycolor}{P}}
\newcommand{\Bivar}{\nodebox[black]{Bivarcolor}{S}}
\definecolor{DonorColor}{RGB}{248,206,204}
\definecolor{ReceiverColor}{RGB}{255,242,204}
\newcommand{\Don}{\nodebox[black]{DonorColor}{$D$}}
\newcommand{\Rec}{\nodebox[black]{ReceiverColor}{$R$}}
\newcommand{\Add}{
\tikz[baseline=(n.base)]\node[
  draw,
  circle,
  inner sep=1.2pt,
  fill=Addcolor
] (n) {\textsf{+}};
}
\def\eqref#1{equation~\ref{#1}}
\def\ceil#1{\lceil #1 \rceil}
\def\1{\bm{1}}
\DeclareMathAlphabet{\mathsfit}{\encodingdefault}{\sfdefault}{m}{sl}
\SetMathAlphabet{\mathsfit}{bold}{\encodingdefault}{\sfdefault}{bx}{n}
\begin{document}
\date{}

\title{\Large \bf \method: Privacy-Preserving Convolutional Neural Networks using Low-Degree Polynomial Approximations and Structural Optimizations on Leveled FHE}

\author{
{\rm Eduardo Chielle}\\
New York University Abu Dhabi
\and
{\rm Manaar Alam}\\
New York University Abu Dhabi
\and
{\rm Jinting Liu}\\
New York University Shanghai
\and
{\rm Jovan Kascelan}\\
New York University Abu Dhabi
\and
{\rm Michail Maniatakos}\\
New York University Abu Dhabi
}

\maketitle

\begin{abstract}
Recent work using Fully Homomorphic Encryption (FHE) has made \textit{non-interactive privacy-preserving inference} of deep Convolutional Neural Networks (CNN) possible.
However, the performance of these methods remain limited by their heavy reliance on \textit{bootstrapping}, a costly FHE operation applied across multiple layers, severely slowing inference.
Moreover, they depend on \textit{high-degree polynomial approximations} of non-linear activations, which increase multiplicative depth and reduce accuracy by 2–5\% compared to plaintext ReLU models.
In this work, we close the accuracy gap between FHE-based non-interactive CNNs and their plaintext counterparts, while also achieving faster inference than existing methods.
We propose a \textit{quadratic polynomial approximation} of ReLU, which achieves the theoretical minimum multiplicative depth for non-linear activations, together with a penalty-based training strategy.
We further introduce \textit{structural optimizations} that reduce the required FHE levels in CNNs by a factor of five compared to prior work, allowing us to \textit{run deep CNN models under leveled FHE without bootstrapping}.
To further accelerate inference and recover accuracy typically lost with polynomial approximations, we introduce parameter clustering along with a joint strategy of data layout and ensemble techniques.
Experiments with VGG and ResNet models on CIFAR and Tiny-ImageNet datasets show that our approach achieves up to $4\times$ faster private inference than prior work, with accuracy comparable to plaintext ReLU models.
\end{abstract}
\section{Introduction}
\textit{Machine Learning as a Service} (MLaaS) is increasingly adopted across industries because it provides access to powerful models without requiring in-house development or maintenance~\cite{mlaas}.
However, it raises serious privacy concerns since models are often trained on proprietary or sensitive data~\cite{ml4medical,ml4finance}, and inference requires clients to share private information (e.g., medical or financial records) with external service providers. 
\textit{Privacy-Preserving Machine Learning} (PPML) addresses these risks using predominantly cryptographic methods, mainly secure \textit{Multi-Party Computation} (MPC)~\cite{ppmlmpcsok} and \textit{Fully Homomorphic Encryption} (FHE)~\cite{ppmlfhesok}, which protect both client data and provider models.
Broadly, PPML methods fall into \textit{interactive} and \textit{non-interactive} categories~\cite{mpcnn}. 
Interactive PPML, based on MPC, requires the client and the service provider to jointly perform inference~\cite{minionn,gazelle,delphi,safenet,cheetah,pillar}. 
It keeps computational costs relatively low but requires multiple communication rounds, which leads to high communication and bandwidth requirements.
Non-interactive PPML, typically using FHE, adopts a fire-and-forget paradigm~\cite{cryptonets,lola,hemet,mpcnn,sarkar,autofhe}, where the client encrypts inputs, the provider computes on ciphertexts, and the client decrypts the output.
This reduces communication to a single round at the expense of heavy server-side computation. 
Despite the costs, non-interactive PPML is well-suited to third-party services because it eliminates client involvement during inference, supports clients with low-resource devices, and works in low-bandwidth settings. 
Motivated by these advantages, this work focuses on non-interactive PPML.

A key challenge in non-interactive PPML is implementing activation functions with cryptographic primitives.
Activations are non-polynomial, and FHE natively supports only polynomial operations such as addition and multiplication.
Existing work addresses this by approximating activations with single polynomials~\cite{pillar} or piecewise polynomials~\cite{minionn}.
These approximations can be introduced before or after training.
In \textit{post-training approximation}, models are trained with standard activations (e.g., ReLU) and later replaced with high-precision polynomials. 
For example, MPCNN~\cite{mpcnn} uses minimax polynomials of orders $\{15, 15, 27\}$.
In \textit{pre-training approximation}, ReLU is replaced with a polynomial before training, which allows lower-degree approximations that reduce multiplicative depth and speed up inference.
For example, shallow CNNs like CryptoNets~\cite{cryptonets} and LoLa~\cite{lola}, similar to LeNet-5~\cite{lenet5}, achieve over 98\% accuracy on MNIST with degree-2 polynomials.
However, prior work~\cite{sisyphus} showed that training larger models (beyond AlexNet~\cite{alexnet} or VGG-11~\cite{vgg}) with such low-degree polynomials causes large approximation errors that destabilize training.
\textit{To date, PILLAR~\cite{pillar}, which uses a degree-4 polynomial, is the lowest-degree approximation that generalizes to deep neural networks.}

Implementing non-interactive PPML becomes increasingly difficult with deeper models.
While \textit{bootstrapping} enables evaluation of arbitrarily deep models, it is an expensive operation in FHE and remains the main scalability bottleneck. 
To avoid this cost, the literature has relied on \textit{Leveled Fully Homomorphic Encryption} (LFHE). 
LFHE eliminates bootstrapping and allows faster private inference, but its computation is bounded by a fixed multiplicative depth determined by encryption parameters. 
As a result, the size and complexity of neural networks that can be evaluated under LFHE are strictly limited. 
\textit{The deepest CNNs shown to run with LFHE alone are AlexNet and VGG-11}~\cite{sisyphus}.
Larger models have so far required bootstrapping. 
Consequently, modern deep CNNs have largely been regarded as incompatible with LFHE, and bootstrapping has become the default mechanism to address this limitation despite its high computational cost.
For example, AutoFHE executes ResNet-32 inference with 8 or 19 bootstrapping layers in its lower- or higher-accuracy configuration, respectively~\cite{autofhe}, resulting in heavy computational costs.

This paper revisits the prevailing assumption that modern deep CNNs are incompatible with LFHE.
We argue that heavy reliance on bootstrapping has obscured a large, underexplored optimization space.
In FHE-based PPML, multiplicative depth directly determines the encryption parameters and the frequency of bootstrapping operations needed during evaluation, both of which have a dominant impact on inference time.
Crucially, the multiplicative depth of CNNs is not an immutable property of the model architecture itself, but emerges from how its computation is realized.
The network structure, degree of polynomial activations, linear transformations, data layout, function implementations, and placement of rescaling operations collectively govern depth.
Careful analysis and optimization of these components and their interactions can substantially reduce multiplicative depth without changing model functionality.
By aggressively minimizing multiplicative depth at every level of the model, rather than compensating for inefficiencies with bootstrapping, we show that deep CNNs can be evaluated efficiently under LFHE.

Guided by this philosophy, our goal is to design the \textit{fastest non-interactive PPML framework that does not compromise accuracy}, under LFHE. We achieve this by combining four techniques: (i) low-degree polynomial activations that attain the theoretical minimum multiplicative depth for non-linear activation functions, (ii) a suite of structural optimizations that remove redundant operations, (iii) an ensemble-based strategy that recovers accuracy to near-ReLU levels, and (iv) a clustering technique to speed up the computation of convolutional layers.
Importantly, while this work focuses on LFHE, the proposed techniques are not limited to it.
They can be applied with FHE to run very deep models while substantially reducing the number of required bootstrapping operations.

\noindent \textbf{\underline{Contributions}}:
This work advances non-interactive PPML by enabling, for the first time, deep CNNs (e.g. ResNet-32) to run entirely under LFHE without bootstrapping, achieving accuracy comparable to plaintext ReLU models while outperforming prior work in both accuracy and inference time. 
Our contributions are fourfold:

\begin{itemize}[leftmargin=*]
    \item To the best of our knowledge, this is the first work to train deep CNNs effectively with \textit{degree-2 polynomial activations}. We introduce a novel \textit{penalty function} that ensures stable training and delivers accuracy on par with plaintext ReLU-based models.
    This achieves the theoretical minimum multiplicative depth of one for non-linear activations, whereas the lowest previously known stable alternative, PILLAR, requires a multiplicative depth of three (see \S\ref{sec:model_approx_train}).

    \item We propose three \textit{structural optimization} techniques that reduce multiplicative depth of a model while preserving functional equivalence: 
    (i) \textit{Node Fusing}, which merges consecutive layers by folding normalization parameters into neighboring nodes, effectively eliminating the normalization layer;
    (ii) \textit{Weight Redistribution}, which adjusts parameters so that coefficients of highest-order terms in polynomial activations and batch normalization, as well as divisors in pooling layers, normalize to one, removing redundant multiplications by constants; and 
    (iii) \textit{Tower Reuse}, which allows multiple multiplications within the same FHE level before rescaling. 
    In the CKKS scheme used in this work~\cite{ckks}, ciphertexts carry a scaling factor that grows with multiplications.
    Rescaling keeps ciphertext values within range but consumes a level.
    Since each level is represented by a modulus and homomorphic operations apply to all moduli, requiring fewer rescaling steps reduce the number of moduli, leading to faster homomorphic operations
    (see \S\ref{sec:struct_optmizations}).

    \item We introduce a novel encoding strategy that utilizes unused ciphertext slots in FHE to pack multiple CNN model instances, enabling simultaneous \textit{ensemble inference} within a single homomorphic evaluation. This approach enables the use of multiple independently trained models to recover accuracy lost to low-degree polynomial approximations through careful reuse of ciphertext–plaintext structures, albeit with additional inference overhead (see \S\ref{sec:codesign_techniques}).
    
    \item To offset the additional private inference overhead from the choice of data layout that enables ensemble inference, we introduce a \textit{parameter clustering} technique for convolutional layers that groups similar parameters to reduce redundant plaintext encodings and homomorphic computation. This technique recovers much of the lost inference speed while preserving model accuracy (see \S\ref{sec:codesign_techniques}).
    
\end{itemize}

We evaluate our method on CIFAR \cite{cifar} and Tiny-ImageNet~\cite{imagenet} datasets, and show that it achieves up to $4\times$ faster private inference than prior work while achieving accuracy comparable to plaintext ReLU models (see \S\ref{sec:results}).

\section{Background}
\subsection{Leveled Fully Homomorphic Encryption}
\label{sec:lfhe}

\textit{Homomorphic Encryption} allows meaningful computations to be performed directly on encrypted data.
In this work, we employ Residue Number System (RNS) variant~\cite{rns-ckks} of the \textit{Cheon-Kim-Kim-Song} (CKKS)~\cite{ckks} encryption scheme, which supports approximate arithmetic over encrypted complex numbers.
CKKS is an FHE scheme based on the Ring-Learning With Errors (RLWE) problem~\cite{rlwe}.
The scheme operates across three spaces: message $\mathcal{M} = \mathbb{C}$, plaintext $\mathcal{P} = R_{Q} = \mathbb{Z}_{Q}[x]/(x^N+1)$, and ciphertext $\mathcal{C} = R_{Q} \times R_{Q}$.
\textit{Encoding} maps a vector of $N/2$ numbers $\in \mathcal{M}$ into a degree-$N$ polynomial $\in \mathcal{P}$, and \textit{encryption} transforms a plaintext $\in \mathcal{P}$ into a ciphertext $\in \mathcal{C}$.
In RNS-CKKS, the ciphertext modulus $Q$ is decomposed into $\mathcal{L}+1$ smaller primes $Q = \prod_{i=0}^\mathcal{L}{q_i}$, where $\mathcal{L}$ denotes the level.
A polynomial in $R_Q$ is represented as $\mathcal{L}+1$ polynomials in $R_{q_i} \forall i \in [0, \mathcal{L}] \cap \mathbb{Z}$, which allows computations using native 64-bit integer arithmetic.

CKKS supports homomorphic additions, multiplications, and rotations.
Additions and multiplications act element-wise on the encoded vector, akin to SIMD operations.
Rotations rotate the vector of $N/2$ encoded messages by a step $\in \mathbb{Z}_{N/2}$.
During encoding, coefficients are scaled by a factor $\Delta$, rounded to the nearest integers, and reduced modulo $Q$.
Multiplying two encoded messages scales the result to $\Delta^2$, requiring \textit{rescaling} to restore the scale to $\Delta$.
Rescaling removes one RNS modulus, reducing the level by 1.
When the level reaches zero, \textit{bootstrapping} can reset the ciphertext to a higher level, enabling unbounded computation.
However, it is computationally expensive and avoided when possible.
FHE without bootstrapping is called \textit{Levelled Fully Homomorphic Encryption} (LFHE).
In LFHE, the encryption parameters $(N,Q)$ are chosen to provide the desired security and sufficient levels for a circuit with known multiplicative depth.

\subsection{Challenges of Using Polynomial Activation}
\label{sec:relu_challenge}

ReLU, defined as $ReLU(x) = \max(0, x)$, is one of the most widely used activation functions in deep learning due to its simplicity and computational efficiency.
Its piecewise linear form provides the non-linearity required for neural networks to model complex patterns.
However, in FHE settings, ReLU is difficult to implement because it relies on conditional branching, whereas FHE natively supports only additions and multiplications. As a result, any FHE-compatible ReLU must be a polynomial approximation.
Higher-degree polynomials produce more accurate approximations but incur larger multiplicative depth. For a degree-$d$ polynomial, the multiplicative depth is $\delta = \left\lceil \log_2 d \right\rceil + 1$, where the $(+1)$ accounts for coefficient multiplication. 
The lowest multiplicative depth for a ReLU approximation that generalizes to deep CNNs is achieved by PILLAR~\cite{pillar}, which uses a degree-4 polynomial $\sum_{i=0}^{4} c_i x^i$ with $\delta = 3$. 
The theoretical minimum multiplicative depth for an activation function is one, attainable with a quadratic polynomial $x^2 + c_1 x + c_0$.
Such polynomials have been successfully used in shallow CNNs like CryptoNets~\cite{cryptonets} and LoLa~\cite{lola}, but they fail in deeper models due to the escaping activation problem~\cite{sisyphus}.
Moreover, polynomial approximations are accurate only over bounded input intervals.
When used as direct substitutes for ReLU during training, they often cause a substantial drop in model accuracy~\cite{sisyphus,coinn,pillar}.
This section focuses on two main reasons for this degradation: \emph{escaping activations}, where intermediate layer outputs leave the polynomial's approximation range, and \emph{coefficient truncation}, arising from representing polynomial coefficients in fixed-point format to comply with FHE arithmetic.\vspace{1mm}

\noindent\textbf{\underline{Escaping Activations.}}
Let $p_d(x) = \sum_{k=0}^d a_k x^k$ be a degree-$d$ polynomial approximation of ReLU, with coefficients $a_k \in \mathbb{R}$, 
$\forall k \in \{0, \dots, d\}$.
These coefficients are typically obtained by minimizing the least-squares error between $p_d(x)$ and ReLU 
on
a finite set of real-valued 
points $\{x_i\}_{i=1}^N \subset [-c, c]$, for some parameter $c > 0$:
$
\min_{a_0, \ldots, a_d}\;\sum_{i=1}^N \left( ReLU(x_i)\;-\;p_d(x_i) \right)^2.
$
The resulting polynomial approximates ReLU well on $[-c, c]$ but can deviate substantially outside this range.
ReLU exhibits piecewise linear growth:
\begin{equation*}
ReLU(x)\;=\;
\begin{cases}
0, & \text{if } x \leq 0 \\
x, & \text{if } x > 0
\end{cases}
\quad \Rightarrow \quad ReLU(x)\;=\;\mathcal{O}(x)
\end{equation*}
whereas any polynomial $p_d(x)$ with $d \geq 2$ grows as:
\begin{equation*}
p_d(x)\;=\;a_d x^d\;+\;a_{d-1} x^{d-1}\;+\;\cdots\;+\;a_1 x\;+\;a_0\;=\;\mathcal{O}(x^d)
\end{equation*}
This mismatch in growth rates leads to the \textit{escaping activations} phenomenon~\cite{sisyphus}: as inputs propagate through the network, intermediate values can move outside $[-c, c]$, into regions where $p_d(x)$ no longer resembles ReLU. 
The resulting excessively large activations can cause exploding weights and rapid degradation of model performance, unless the training procedure is specifically adapted to control them.\vspace{1mm}

\noindent\textbf{\underline{Coefficient Truncation.}}
In privacy-preserving FHE inference, all computations use fixed-point arithmetic with limited precision.
This restriction limits both the range and resolution of representable values, constraining the domain of activation function inputs as well as the magnitudes of their polynomial approximation coefficients.
The coefficients $\{a_k\}_{k=0}^{d}$ are usually obtained via floating-point least-squares fitting and then quantized for fixed-point evaluation under FHE:
$
\tilde{a}_k\;=\;\frac{\lfloor a_k \cdot 2^b \rceil}{2^b},
$
where $b \in \mathbb{N}$ is the number of fractional bits in the fixed-point format and $\lfloor \cdot \rceil$ denotes rounding to the nearest integer.
Least-squares fitting often produces small-magnitude coefficients, especially for higher-degree monomials.
If $\left|a_k\right| < 2^{-(b+1)}$, the quantized coefficient $\tilde{a}_k$ becomes zero, effectively removing the monomial $x^k$ from the approximation.
This \textit{coefficient truncation}~\cite{pillar} changes the polynomial's shape and can cause substantial deviation from the intended ReLU behavior, even within the approximation interval $[-c, c]$.
\section{Training with Polynomial Approximation}
\label{sec:model_approx_train}

\subsection{Problem Setting}
\textbf{\underline{Dataset Configuration.}}
Consider a dataset $\mathcal{D} = \{(x_i, y_i)\}_{i=1}^{M}$ for a multi-class classification task, where $x_i \in \mathbb{R}^n$ denotes the feature vector and $y_i \in \{1, \dots, K\}$ the corresponding label.
The dataset contains $M$ samples distributed across $K$ classes.\vspace{1mm}

\noindent\textbf{\underline{ReLU Network.}}
Let $f: \mathbb{R}^n \rightarrow \mathbb{R}^K$ denote an $L$-layer feed-forward neural network with ReLU activations. The network is parameterized by weight matrices $\{W^{(l)}\}_{l=1}^{L}$, where each $W^{(l)} \in \mathbb{R}^{n_l \times n_{l-1}}$ and $n_l$ denotes the number of neurons in layer $l$.
The network output is defined as $f(x) = h^{(L)}(x)$, where $h^{(0)}(x) = x$ and $h^{(l)}(x) = ReLU(W^{(l)} h^{(l-1)}(x))$ for all $1 \leq l \leq L$.
The activation function $ReLU(z)$ is applied coordinate-wise as $ReLU(z) = \max(0, z)$.\vspace{1mm}

\noindent\textbf{\underline{Polynomial Network.}}
We define a network $g: \mathbb{R}^n \rightarrow \mathbb{R}^K$ with the same architecture as $f$, but replacing each ReLU activation with a degree-$d$ polynomial $p_d(z) = \sum_{k=0}^{d} a_k z^k$ that approximates the ReLU function on a bounded interval $[-c, c]$.
The polynomial coefficients $\{a_k\}_{k=0}^{d}$ are chosen so that the maximum approximation error satisfies: $\max_{z \in [-c, c]} | ReLU(z) - p_d(z) | \leq \varepsilon$, where $\varepsilon > 0$ is a prescribed approximation tolerance.
The resulting network output is given by $g(x) = h^{(L)}_{p_d}(x)$, where $h^{(0)}_{p_d}(x) = x$ and $h^{(l)}_{p_d}(x) = p_d(W^{(l)} h^{(l-1)}_{p_d}(x))$ for all $1 \leq l \leq L$.\vspace{1mm}

\noindent\textbf{\underline{Challenges.}}
While polynomial activations enable compatibility with FHE schemes, they introduce several non-trivial optimization challenges, most notably \textit{escaping activations} and \textit{coefficient truncation}, as discussed in \S\ref{sec:relu_challenge}. 
Addressing these issues typically requires careful selection of the polynomial degree $d$, the approximation interval $[-c, c]$, and coefficient bounds $b$, together with regularization strategies designed to ensure that intermediate activations remain within the valid approximation range throughout training.

\subsection{Training Strategy}
\textbf{\underline{Quantization-Aware Polynomial Fitting.}}
To avoid precision loss due to coefficient truncation, we incorporate fixed-point constraints directly into coefficient estimation procedure using \textit{quantization-aware polynomial fitting}~\cite{pillar}.
Given an approximation interval $[-c, c]$ and fixed-point precision with $b$ fractional bits, we define a quantized input domain for polynomial fitting as:
$
X = \{ x \in [-c, c] \mid x = k \cdot 2^{-b},\ k \in \mathbb{Z}\}.
$
For each input $x_i \in X$, we compute scaled ReLU outputs
$
Y_i = 2^b \cdot ReLU(x_i)
$
so that regression targets are integers, which reduces risk of precision loss during optimization.
We construct a 
matrix $B \in \mathbb{R}^{M \times (d+1)}$ with entries
$
B_{i,k} = x_i^k \ \text{for } 0 \le k \le d.
$
The coefficients $A = [a_0, \ldots, a_d]^\top \in \mathbb{Z}^{d+1}$ are obtained by solving bounded integer least-squares problem:
$
\min_{A \in \mathbb{Z}^{d+1}} \| BA - Y \|_2^2 \ \text{subject to } a_k \in [ -2^{b - 1}, 2^{b - 1} - 1].
$
The resulting fixed-point polynomial is
$
p_d(x) = \sum_{k=0}^{d} ( a_k / 2^b ) x^k.
$

\noindent\textbf{\underline{Activation Regularization.}}
To avoid escaping activations, we introduce a \textit{regularization strategy} that constrains the inputs to polynomial activation functions (i.e., the pre-activations) to remain within the valid approximation interval $[-c, c]$.
The classification loss supervises only the final output and provides no mechanism to limit intermediate pre-activation values. 
Hence, minimizing it alone does not prevent pre-activations from drifting outside the interval $[-c, c]$, where the polynomial diverges from ReLU and destabilizes training.
We introduce a layer-wise penalty that penalizes out-of-range pre-activations. 
For a mini-batch $\mathcal{B} \subset \mathcal{D}$, the following training loss is minimized:\vspace{-1mm}
\begin{equation}\label{eq:penalty_loss}
\begin{split}
\mathcal{L}_{\mathcal{B}}
    &= 
    \underbrace{\frac{1}{|\mathcal{B}|} \sum_{(x, y) \in \mathcal{B}} 
        \ell_{\mathrm{CE}}\!\left(g(x), y\right)}_{\text{classification loss}}
    \\
    &\quad+\;
    \zeta\;\underbrace{\frac{1}{L} \sum_{l=1}^{L} 
        \left\| z_p^{(l)} - \mathrm{clip}\!\left(z_p^{(l)}; [-c, c]\right) \right\|_2}%
        _{\text{clip-range penalty}}
\end{split}
\end{equation}
where $\ell_{\mathrm{CE}}$ is the \textit{cross-entropy loss}, $z_p^{(l)}$ are pre-activations at layer $l$, i.e., outputs of affine transformations $W^{(l)} h^{(l-1)}_{p_d}(x)$ before the polynomial activation.
The function $\mathrm{clip}(z; [-c, c])$ is applied element-wise and defined as: $\mathrm{clip}(z; [-c, c])_i = \max(-c, \min(z_i, c))$.
The second loss term penalizes pre-activations
that lie outside 
$[-c, c]$, with strength controlled by the regularization parameter $\zeta>0$.
See \S\ref{sec:proof_penalty} for a proof of the penalty function's correctness. 
Training can still become unstable for two reasons: (i) in early epochs pre-activations can grow unbounded before the model learns to contain them and (ii) using full regularization weight $\zeta$ from the start can let the penalty dominate and destabilize optimization. 
As additional strategies, we also consider \textit{clipping pre-activations during training} and introducing a \textit{warm-up schedule for $\zeta$}.\vspace{1mm}

\noindent\textbf{\underline{Pre-Activation Clipping.}}
During early training stages, network weights remain close to their random initialization, and the optimization process does not immediately constrain the pre-activation values to lie within the target approximation interval $[-c, c]$.
As a result, unbounded pre-activations may arise before the model learns to keep them within range, 
causing
training instability that can degrade model behavior. 
To prevent this, we restrict pre-activations to the interval $[-c, c]$ before evaluating the polynomial activation, which is achieved through a \textit{clipping strategy} $\mathrm{clip}(z; [-c, c])$~\cite{pillar}:\vspace{-1mm}
$$
h^{(l)}_{p_d}(x) \;=\; p_d\left(\mathrm{clip}\left(W^{(l)} h^{(l-1)}_{p_d}(x); [-c, c]\right)\right).
$$
Importantly, this \textit{clipping operation is applied only during training} to stabilize learning.
It is performed after computing the clip-range penalty to ensure that gradients from the regularization term, which encourages the network to keep pre-activations within the approximation interval $[-c, c]$, are preserved and not masked by the clipping.
At inference time, the clipping function is removed:
$
h^{(l)}_{p_d}(x) \;=\; p_d\left(W^{(l)} h^{(l-1)}_{p_d}(x)\right)
$.
The model, having learned to constrain pre-activations during training, is expected to remain within the approximation interval without explicit clipping at inference.\vspace{1mm}

\noindent\textbf{\underline{Regularization Warm-up.}}
While activation regularization and pre-activation clipping are both necessary for training stability, applying the full regularization strength from the outset can lead to numerical instability, particularly in larger models.
In the early epochs, many pre-activation values lie outside the target interval $[-c, c]$ across several layers.
The clip-range penalty adds a contribution from every such layer, so the penalty term becomes very large.
In extreme cases, this can cause the total loss to become numerically undefined.

To address this issue, the regularization strength $\zeta$ is progressively increased over the initial 
epochs, which is implemented through a \textit{regularization warm up schedule}~\cite{pillar}.
Let $T_{warm}$ denote the total number of warm-up epochs. For each epoch $t$, we define a \textit{time-dependent regularization weight} $\zeta_t$:\vspace{-1mm}
$$
\zeta_t \;=\;
\begin{cases}
\alpha_t \cdot \zeta & \text{if } t \leq T_{warm} \\
\zeta & \text{if } t > T_{warm}
\end{cases},
$$
where $\{\alpha_t\}_{t=1}^{T_{warm}}$ is a fixed sequence of scaling factors satisfying $0 < \alpha_1 < \cdots < \alpha_{T_{warm}} < 1$. 
In practice, we empirically find that setting $T_{warm} = 4$ provides a stable convergence. 
We use a 
predefined sequence of scaling factors: $\alpha = \left\{ \frac{1}{100}, \frac{1}{50}, \frac{1}{10}, \frac{1}{5} \right\}$, which produces the epoch-wise regularization strength: $\zeta_t \;\in\; \left\{ \frac{\zeta}{100}, \frac{\zeta}{50}, \frac{\zeta}{10}, \frac{\zeta}{5}, \zeta, \zeta, \dots \right\}$.
This warm-up schedule ensures that the regularization penalty is introduced progressively in the initial training epochs without suffering from exploding loss values. Figure~\ref{fig:vis_training} shows an overview of the training process with polynomial approximation.
\begin{figure}[!t]
    \centering
    \includegraphics[width=0.98\linewidth]{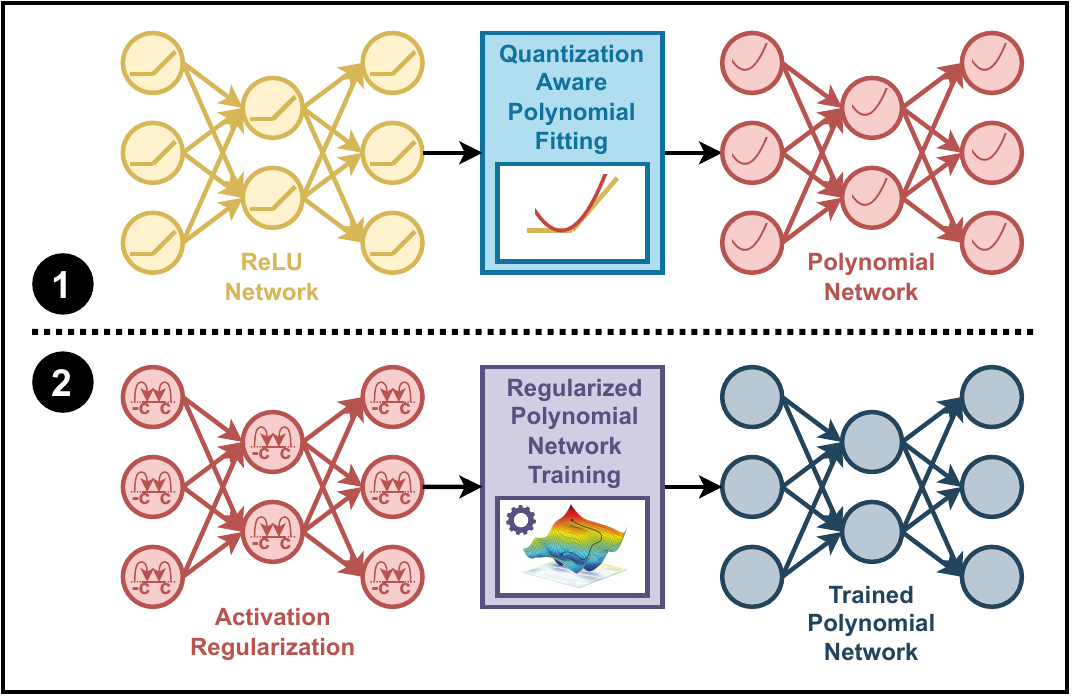}
    \caption{\small Overview of the two-stage training pipeline of a neural network with polynomial approximation. \protect\circled{1} A ReLU network is approximated using \textit{quantization-aware polynomial fitting} to obtain an initial polynomial network. \protect\circled{2} The polynomial network is then trained through \textit{regularized polynomial network training}, consisting of \textit{activation regularization}, \textit{pre-activation clipping}, and \textit{warm-up scheduling}, to ensure stable convergence.\vspace{-2mm}}
    \label{fig:vis_training}
    \vspace{-2mm}
\end{figure}
\section{Structural Optimizations}\label{sec:struct_optmizations}

\subsection{Node Fusing}\label{sec:node_fusing}
Our first optimization is \textit{node fusing}, which merges consecutive computational nodes into a single equivalent function. 
This reduces both the number of operations and multiplicative depth. 
Formally, two sequential nodes $F(G(x))$ are replaced with $H(x)$ such that $F(G(x)) = H(x)$, with $H(x)$ more efficient to evaluate.
Following this principle, we present four cases where compositions are fused into a single, more efficient operation, as illustrated in Figure \ref{fig:node_fusing}.
Since all cases involve batch normalization, recall its polynomial form:
\begin{equation*}
    B(x) = b_1 x + b_0
\end{equation*}
\noindent with $b_1 = \frac{\gamma}{\sigma}, \; b_0 = \beta_b - b_1 \mu$, where $\mu$, $\sigma$ are the input mean and standard deviation, and $\gamma$, $\beta_b$ are learnable parameters.
We outline the fusion cases below. See \S\ref{sec:node_fusing_derivation} for full derivations.\vspace{2mm}

\noindent\underline{\textbf{Case 1: $P(B(x)) \mapsto P(x)$.}}
When a polynomial activation $P(\cdot)$ follows batch normalization, the two can be fused into a single quadratic polynomial.
If $P(x) = c_2 x^2 + c_1 x + c_0$, then $P(B(x)) = p_2 x^2 + p_1 x + p_0$, with fused coefficients $p_2 = b_1^2 c_2, \; p_1 = b_1(2 b_0 c_2 + c_1), \; p_0 = b_0^2 c_2 + b_0 c_1 + c_0.$\vspace{2mm}

\noindent\underline{\textbf{Case 2: $B(C(x)) \mapsto C(x)$.}}
Batch normalization can be fused into a preceding convolution~\cite{markuvs2018fusing}.
A convolution is given by $C(x) = \sum_i w_i x_i + \beta_c$, where $w_i$ are the weights and $\beta_c$ is the bias.
The fused form becomes a single convolution $B(C(x)) = \sum_i \omega_i x_i + \alpha$, with fused coefficients $\omega_i = b_1 w_i, \; \alpha = b_1 \beta_c + b_0$.\vspace{2mm}

\noindent\underline{\textbf{Case 3: $P(B_X(x)+B_Y(y)) \mapsto S(x,y)$.}}
In residual networks, skip connections often merge two batch-normalized branches by summing them before applying an activation $P(\cdot)$. 
Specifically, $P(B_X(x) + B_Y(y))$.
Here, $B_X$ and $B_Y$ are independent batch normalization layers to inputs $x$ and $y$, respectively.
This structure can be fused into a single quadratic bivariate polynomial $S(x,y) = d_{X2} x^2 + d_{Y2} y^2 + d_{XY} xy + d_X x + d_Y y + d_0$, with coefficients $d_{X2} = c_2 b_{X1}^2, \; d_{Y2} = c_2 b_{Y1}^2, \; d_{XY} = 2 c_2 b_{X1} b_{Y1}$, $d_X = b_{X1}(2 c_2(b_{X0}+b_{Y0}) + c_1), \; d_Y = b_{Y1}(2 c_2(b_{X0}+b_{Y0}) + c_1)$, $d_0 = c_2(b_{X0}+b_{Y0})^2 + c_1(b_{X0}+b_{Y0}) + c_0$.\vspace{2mm}

\noindent\underline{\textbf{Case 4: $P(B_X(x)+y) \mapsto S(x,y)$.}}
Some skip connections use identity shortcuts, where the input $y$ is added directly to the batch-normalized branch $B_X(x)$ before applying the activation.
This structure can be fused into a quadratic bivariate polynomial
$S(x,y) = d_{X2} x^2 + d_{Y2} y^2 + d_{XY} xy + d_X x + d_Y y + d_0$,
with $d_{X2} = c_2 b_{X1}^2, \; d_{Y2} = c_2, \; d_{XY} = 2 c_2 b_{X1}$, $d_X = b_{X1}(2 c_2 b_{X0} + c_1), \; d_Y = 2 c_2 b_{X0} + c_1$, $d_0 = c_2 b_{X0}^2 + c_1 b_{X0} + c_0$.
\vspace{2mm}

Node fusing collapses batch normalizations, activation functions, convolutional layers, and skip connections into single fused polynomials, eliminating redundant nodes and reducing multiplicative depth.
This technique is applied after model training but before inference, transforming the model structure into a functionally equivalent, more efficient form.

\begin{figure}
    \centering
    \includegraphics[width=1.0\linewidth]{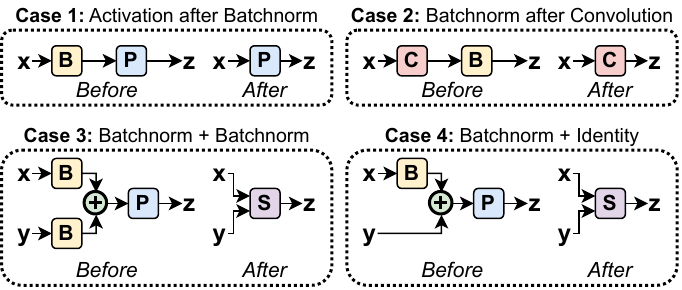}
    \caption{\small Illustration of node fusing cases. Nodes are batch normalization (\protect\BN), convolution (\protect\Conv), univariate (\protect\Poly)~and bivariate (\protect\Bivar) polynomial activations, and addition (\protect\Add).}
    \label{fig:node_fusing}
    \vspace{-2mm}
\end{figure}

\subsection{Weight Redistribution}\label{sec:weight_redistribution}
Our second optimization is \textit{weight redistribution}, a technique that redistributes weights across the network while maintaining functional equivalence.
The goal is to reduce the multiplicative depth of selected operations by one.  
It specifically targets average pooling, batch normalization, and polynomial activation.  
For average pooling, redistribution eliminates the normalization step by setting the divisor to one.
For polynomial functions, coefficients are normalized so that the highest-order coefficient is set to one.
For example, the polynomial activation $c_2 x^2 + c_1 x + c_0$ is transformed into $x^2 + \bar{c}_1 x + \bar{c}_0$, reducing its multiplicative depth from two to one, \textit{the theoretical minimum for an activation function}.
These transformations alone break model equivalence.
To maintain it, other nodes in the network must compensate for the redistributed weights.
We refer to the nodes that initiate redistribution as \textit{donors}, and to nodes that adjust their parameters to restore equivalence as \textit{receivers}.
We represent the network as a directed graph and traverse it to identify all eligible donor nodes.  
For each donor, redistribution can be applied either \textit{forward} or \textit{backward}, affecting receivers among the donor's successors or predecessors, respectively, as illustrated in Figure \ref{fig:weight_redistribution}.

\vspace{2mm}\noindent\underline{\textbf{Update Forward.}}

\vspace{2mm}\noindent\textbf{Donors.} We first consider donor updates in the forward direction.  
Our goal is to construct a normalized function $\bar{F}(x)$ such that $\upsilon \bar{F}(x) = F(x)$, where $F(x)$ is the original donor function and $\upsilon$ is the update term to be propagated to receivers.

\vspace{1mm}\noindent\textit{Average Pooling.}
The average pooling operation is given by $\mu(x) = k^{-1} \sum_i{x_i}$, where $k$ is the kernel size.
In the normalized function we set \(k = 1\), reducing the operation to $\bar{\mu}(x) = \sum_i x_i$.
Thus, for the equality to be true, we have $\upsilon = k^{-1}$.

\vspace{1mm}\noindent\textit{Polynomial Functions.}
Batch normalization and polynomial activation are polynomial functions.
For a univariate polynomial $P(x) = \sum_{i=0}^{d}{c_i x^i}$ of degree $d$, and a normalized polynomial $\bar{P}(x) = x^d + \sum_{i=0}^{d-1}{\bar{c}_i x^i}$, where $\bar{c}_i$ are the normalized coefficients, we have $\upsilon = c_d$ and $\bar{c}_i = c_i \upsilon^{-1}$ for $\upsilon \bar{P}(x) = P(x)$ to hold.
The extension to the bivariate polynomial activation is analogous and deferred to \S \ref{sec:weight_redistribution_derivation}.

\vspace{1mm}\noindent\textbf{Receivers.}
After a donor update, receivers must be adjusted to maintain model equivalence.  
The original composed function is $G(F(\cdot))$, where $F(\cdot)$ is the donor and $G(\cdot)$ is the receiver.  
Updating the donor to $\bar{F}(\cdot)$ without compensating the receiver results in $G(\upsilon^{-1} F(\cdot))$, breaking model equivalence.
To restore it, we modify the receiver to $\bar{G}(\cdot)$ so that $\bar{G}(\bar{F}(\cdot)) = \bar{G}(\upsilon^{-1} F(\cdot)) = G(F(\cdot))$.
Receivers fall into two categories: kernel functions and polynomial functions. 

\vspace{1mm}\noindent\textit{Kernel Functions.}
Convolution, linear, and average pooling share the form $K(x) = \sum_i w_i x_i + \beta$ ($w_i = k^{-1}$ and $\beta = 0$ for average pooling).
To maintain model equivalence, we must update the kernel function such that $\bar{K}(\upsilon^{-1} x) = K(x)$ is valid.
For that, we set $\bar{w}_i = w_i \upsilon$ and keep $\bar{\beta} = \beta$.

\vspace{1mm}\noindent\textit{Polynomial Functions.}
Batch normalization and polynomial activation update as $\bar{P}(\upsilon^{-1} x) = P(x)$, implying that $\bar{c}_i = c_i \upsilon^i$.
For the bivariate polynomial activation, in $\bar{S}(\upsilon^{-1} x, y) = S(x,y)$ the index $i$ corresponds to the exponent of $x$, while in $\bar{S}(x, \upsilon^{-1} y) = S(x,y)$ it corresponds to the exponent of $y$.

\vspace{2mm}\noindent\underline{\textbf{Update Backward.}}

\vspace{2mm}\noindent\textbf{Donors.}
Backward updates are essentially the inverse of forward updates.  
For average pooling, the update is identical, since $\bar{\mu}(\upsilon x) = \upsilon \bar{\mu}(x)$, while for polynomial donors, it must hold that $\bar{P}(\upsilon x) = P(x)$; hence, $\upsilon = c_d^{1/d}$ and $\bar{c}_i = c_i \upsilon^{-i}$.

\vspace{1mm}\noindent\textbf{Receivers.}
To preserve model equivalence, receivers must satisfy $\bar{G}(x) = \upsilon G(x)$.
Kernel receivers update as $\bar{K}(x) = \upsilon K(x)$, which implies that $\bar{w}_i = w_i \upsilon$ and $\bar{\beta} = \beta \upsilon$.
And polynomial receivers update according to $\bar{P}(x) = \upsilon P(x)$; hence, $\bar{c}_i = c_i \upsilon$.

\begin{figure}
    \centering
    \includegraphics[width=1.0\linewidth]{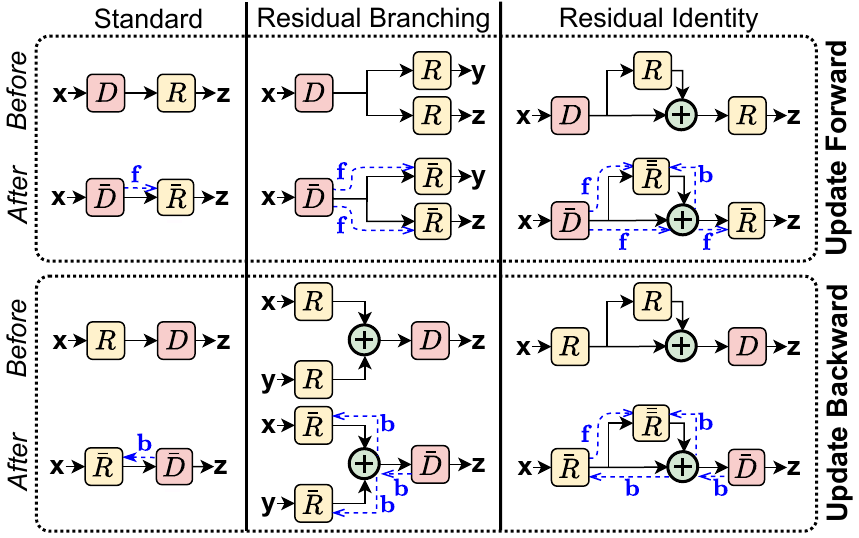}
    \caption{\small
        Illustration of weight redistribution cases.
        \protect\Don represents a donor and \protect\Rec a receiver.
        The number of bars~($\bar{\cdot}$) on a \protect\Don or \protect\Rec represents the number of updates the node received.
        \textcolor{blue}{$\mathbf{f}$} represents a forward and \textcolor{blue}{$\mathbf{b}$} a backward update. 
    }
    \label{fig:weight_redistribution}
    \vspace{-2mm}
    \vspace{-4pt}
\end{figure}

\vspace{1mm}
Weight redistribution is performed after model training, specifically immediately following node fusing, and prior to inference.
It reduces multiplicative depth, while maintaining model equivalence.
Figure \ref{fig:weight_redistribution} illustrates common cases for CNNs, including residual connections, where redistribution must be applied consistently along all converging or diverging branches.
Refer to \S\ref{sec:weight_redistribution_derivation} for detailed derivations for all cases.

\subsection{Tower Reuse}\label{sec:tower_reuse}

In CKKS, the RNS moduli are typically selected as primes close to the scaling factor $\Delta$.
This choice is motivated by the fact that the scaling factors of operands multiply after each homomorphic multiplication, yielding a ciphertext with scale $\Delta^2$.
To prevent exponential growth of the scale, the \emph{rescale} operation is used to reduce the scale back to $\Delta$, consuming one modulus from the modulus chain in the process.
For this reason, the number of moduli depends on the application's multiplicative depth $\delta$.
Since homomorphic operations are carried out across all moduli, their number directly impacts computational efficiency.
Rescaling also introduces approximation errors: the further the dropped modulus from $\Delta$, the larger the error.
This issue is especially relevant for small scaling factors, as fewer RNS primes lie relatively close to $\Delta$.
These errors can accumulate and amplify through subsequent homomorphic operations, increasing the likelihood of incorrect results.
Using a large $\Delta$ has drawbacks, as it requires larger RNS moduli and therefore increases the ciphertext modulus $Q = \prod_{i=0}^{\mathcal{L}} q_i$, where $\mathcal{L}$ is the number of levels and $q_i$ are the RNS primes, which reduces the security level.
One can increase security by increasing the polynomial degree $N$.
However, it is undesirable because operating on larger polynomials slows computation.
Thus, for a fixed security level and polynomial degree, there is a maximum allowable $Q$.
Consequently, for an application-defined $\delta$, this imposes a bound on the maximum $\Delta$.
In practice, when $\delta$ is large, as is common in deep learning models, the resulting $\Delta$ is small.
\vspace{2mm}

\noindent\textbf{\underline{Proposed Solution.}}  
We introduce \textit{tower reuse}, a more general method for determining the RNS moduli by introducing \emph{sublevels} within each level.
Instead of enforcing $q_i \approx \Delta$, we allow $q_i \approx \Delta^\ell$, where $\ell$ denotes the number of sublevels.
Rescaling is performed only when a ciphertext scale exceeds the sublevel capacity of the modulus to be dropped.
Formally,
$\lambda(x) > \lambda(q_i) \implies \downarrow x$,
where $x$ represents a ciphertext, $\downarrow$ denotes the rescale operation, and $\lambda(x) = \round{\log_{\Delta}{\Delta_x}}$ returns the sublevel of a ciphertext, plaintext, or modulus:
with $\Delta_x$ referring to the scale of $x$ and $\Delta$ being the default scaling factor defined by the encryption parameters.
\vspace{2mm}

\noindent\textbf{\underline{Example.}} 
Suppose following an activation function, there is a convolution.
Let modulus $q_i \approx \Delta^2$, thus $\lambda(q_i) = 2$.
Consider an input $x$ with scaling factor $\Delta_x = \Delta$, giving $\lambda(x) = 1$. 
Applying a quadratic activation $y = x^2 + c_1 x + c_0$ yields $\lambda(y) = 2$, with $\lambda(c_1) = 1$ and $\lambda(c_0) = 2$. 
Next, a convolution $z = \sum_i \omega_i y_i + \alpha$ is performed.
Since $\lambda(y) = 2$, by setting $\lambda(\omega) = 1$, we obtain $\lambda(z) = 3$.
As $\lambda(z) > \lambda(q_i)$, a rescale is triggered.  
The rescale operation drops $q_i$ from the modulus chain, reducing the level $\Lambda(\cdot)$ by one and resetting the sublevel to one:
\[
\Lambda(\downarrow z) = \Lambda(z) - 1, \qquad 
\lambda(\downarrow z) = \lambda(z) - \lambda(q_i) = 1.
\]
In terms of $\Delta$, a ciphertext $z$ with $\Delta_z = \Delta^3$ rescaled by $q_i \approx \Delta^2$ results in output scale $\Delta_{\downarrow z} \approx \Delta$.

Tower reuse is an automatic inference-time technique governed by $q_i$, $\Delta$, and the network structure. It improves computational efficiency and reduces approximation errors. The efficiency gains arise from decreasing the number of RNS moduli by a factor of $\ell$, since the method allows $\ell$ levels of multiplication to be performed before rescaling, and from the corresponding reduction in the number of required rescale operations.
The reduction in approximation error is likewise due to the fewer rescale operations, and also to the fact that the method permits the use of smaller $\Delta$ values together with larger $q_i$, making the moduli relatively closer to $\Delta^\ell$.

\subsection{Impact on RNS Levels}

The combination of our polynomial activation function with the structural optimizations proposed in this section reduces the number of RNS levels $\mathcal{L}$ required for multiplication compared to PILLAR \cite{pillar}, which previously achieved the lowest $\mathcal{L}$: from 79 to 16 for VGG-16, 87 to 18 for ResNet-18, 97 to 20 for ResNet-20, and 157 to 32 for ResNet-32.\vspace{2mm}

\noindent\textbf{\underline{Level Analysis.}}
Table \ref{tab:levels} summarizes the number of levels $\mathcal{L}$ required for multiplication across VGG and ResNet models under various optimization techniques.
To illustrate, we focus on ResNet-18, which comprises 17 convolutional layers, batch normalization layers, and activation functions, along with a single average pooling and linear layer:

\begin{itemize}[leftmargin=*, itemsep=1pt, topsep=3pt]
    \item $P_4$. Prior work with the lowest $\mathcal{L}$~\cite{pillar} uses a degree-4 polynomial activation, giving $\mathcal{L}=87$: convolutions, batch normalizations, average pooling, and linear layers each contribute $\mathcal{L}=1$, and the activation contributes $\mathcal{L}=3$.

    \item $P_2$.
    By replacing the activation function with a degree-2 polynomial, as described in \S\ref{sec:model_approx_train}, the required level for activations drops to 2, reducing the model $\mathcal{L}$ to 70.

    \item $P_2 F$.
    Employing node fusing (\S \ref{sec:node_fusing}) on $P_2$ removes all batch normalizations, which lowers $\mathcal{L}$ to 53.
    
    \item $P_2R$. Applying weight redistribution (\S\ref{sec:weight_redistribution}) to $P_2$ instead yields $\mathcal{L}=35$. The highest-order coefficients in batch normalizations and polynomial activations are normalized to one, simplifying them to $x+b_0$ ($\mathcal{L}=0$) and $x^2+c_1x+c_0$ ($\mathcal{L}=1$); the average-pooling divisor is likewise normalized to one ($\mathcal{L}=0$). These simplifications give $\mathcal{L}=35$.

    \item $P_2 F R$.
    Combining node fusing and weight redistribution retains $\mathcal{L}=35$, but reduces homomorphic operations, since node fusing eliminates batch normalizations.
    
    \item $P_2FRT$. Applying tower reuse (\S\ref{sec:tower_reuse}) to $P_2FR$ lets a nonlinear polynomial activation and subsequent linear kernel share a level, halving the number of levels to $\mathcal{L}=18$, following $\mathcal{L}_{P_2 F R T} = \ceil{\frac{\mathcal{L}_{P_2 F R}}{2}}$ for VGG and ResNet models.
\end{itemize}

In summary, by combining a lower-degree polynomial activation with structural optimizations, we reduce levels by nearly a factor of five compared to prior work.

\begin{table}[!t]
    \centering
    \caption
    {\small
        Number of RNS levels required for homomorphic multiplications in VGG and ResNet models. 
        $P_4$: Degree-4 polynomial activation used in PILLAR~\cite{pillar}. 
        $P_2$: Degree-2 polynomial activation (\S\ref{sec:model_approx_train}). 
        $P_2 F$: $P_2$ with node fusing (\S\ref{sec:node_fusing}). 
        $P_2 R$: $P_2$ with weight redistribution (\S\ref{sec:weight_redistribution}). 
        $P_2 FR$: $P_2$ with node fusing and weight redistribution. 
        $P_2 FRT$: $P_2 FR$ with tower reuse (\S\ref{sec:tower_reuse}).
    }
    \begin{tabular}
    {
        >{\hspace{-2pt}}c<{\hspace{-2pt}} | 
        >{\hspace{ 0pt}}c<{\hspace{ 0pt}} | 
        >{\hspace{ 0pt}}c<{\hspace{ 0pt}} | 
        >{\hspace{ 0pt}}c<{\hspace{ 0pt}} | 
        >{\hspace{ 0pt}}c<{\hspace{ 0pt}} | 
        >{\hspace{-3pt}}c<{\hspace{-3pt}} | 
        >{\hspace{-5pt}}c<{\hspace{-5pt}}
    }
        \hline
        {\bf Model} &
        {\bf $P_4$} &
        {\bf $P_2$} &
        {\bf $P_2 F$} &
        {\bf $P_2 R$} &
        {\bf $P_2 FR$} &
        {\bf $P_2 FRT$} \\ \hline

        VGG-16    &  79 &  64 &  51 &  31 &  31 &  16 \\
        ResNet-18 &  87 &  70 &  53 &  35 &  35 &  18 \\
        ResNet-20 &  97 &  78 &  59 &  39 &  39 &  20 \\
        ResNet-32 & 157 & 126 &  95 &  63 &  63 &  32 \\
        \hline
    \end{tabular}
    \label{tab:levels}
    
\end{table}
\section{Co-design Techniques}\label{sec:codesign_techniques}
\subsection{Data Layout}\label{sec:data_layout}

\textit{Data layout} defines how model inputs and weights are mapped into FHE ciphertext and plaintext slots during encoding.
In this work, we adopt the \textit{HW layout}~\cite{chet}, which assigns each input channel to a separate ciphertext and fills its slots with spatial values indexed by two–dimensional coordinates.
Unlike the naive layout~\cite{cryptonets}, where each ciphertext holds only one value and incurs significant overhead, HW layout is more efficient, packing $h \cdot w$ values per ciphertext, with $h$ and $w$ denoting input height and width. 
Figure~\ref{fig:hw_layout} shows this for a $3 \times 4 \times 4$ input with a $3 \times 2 \times 2$ filters (stride 1, no padding), illustrating slot arrangements and gaps that may appear after convolution or other kernel operations. 
These gaps could be removed by remapping, but that requires additional multiplications and increases multiplicative depth. 
Instead, we use an \emph{adaptive (lazy) mapping} strategy: some slots are left unused, and subsequent weights, biases, and coefficients are aligned with the slot arrangement of the preceding layer's output. 
HW layout does not always fully utilize slots, especially when input channels have far fewer values than available slots.
More compact layouts such as \emph{CHW layout}~\cite{chet} and its variants~\cite{mpcnn} address this by mapping multiple channels into one ciphertext, which reduces ciphertext counts and decreases the number of multiplications and additions, ultimately leading to faster inference. 
However, PPML relies on polynomial approximations that reduce accuracy compared to plaintext ReLU models.
For this reason, we retain the less efficient HW layout, repurposing unused slots to improve accuracy instead of minimizing inference time. 
To offset the added latency, we introduce a parameter clustering technique.

\begin{figure}[!t]
    \centering
    \includegraphics[width=0.95\linewidth]{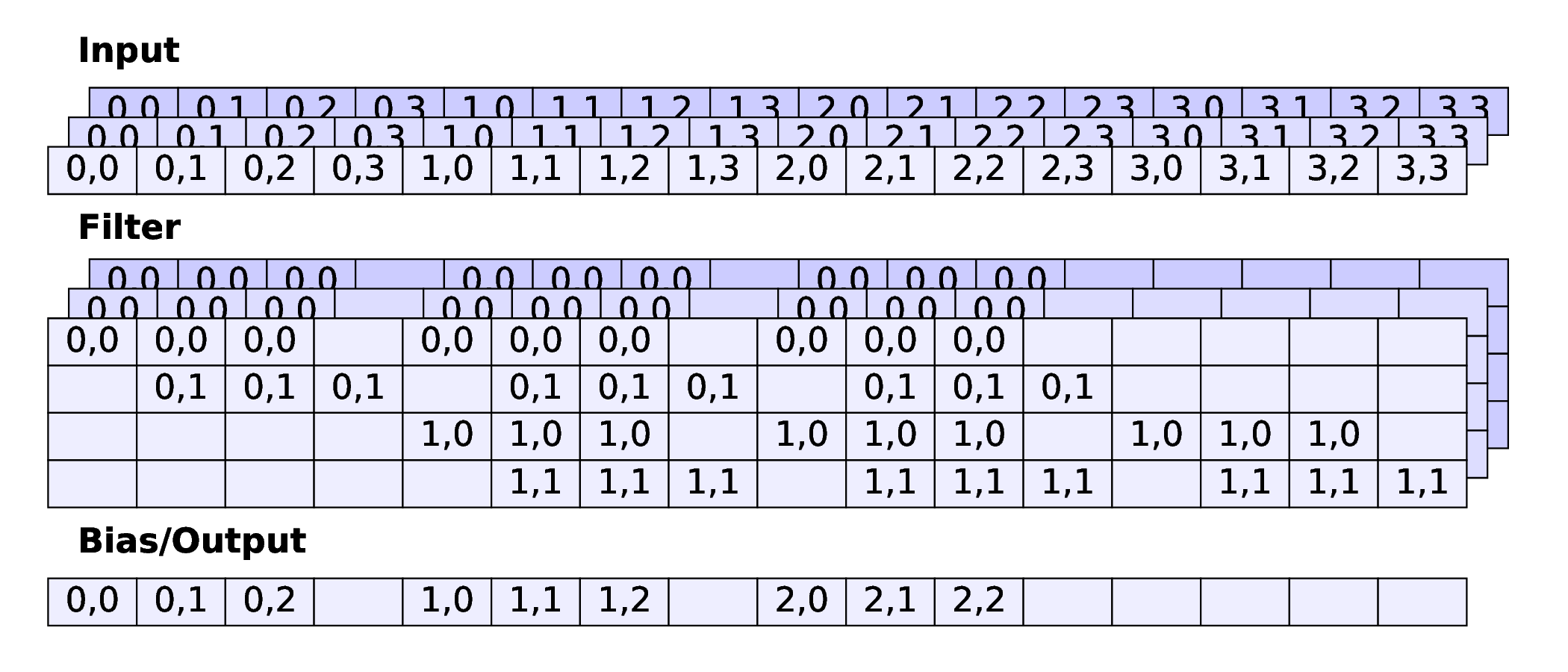}\vspace{-2mm}
    \caption{\small Illustration of HW layout for convolution on a $3 \times 4 \times 4$ input and a $3 \times 2 \times 2$ filter, configured with padding 0 and stride 1.}
    \label{fig:hw_layout}
    \vspace{-4pt}
\end{figure}

\subsection{Clustering of Convolution Parameters}\label{sec:weight_clustering}
Convolution layers account for the majority of parameters in deep neural networks, and their handling under FHE is particularly costly.
Each convolution weight must be encoded repeatedly to match varying ciphertext layouts and levels, which depend on kernel position and layer structure.
As a result, the same scalar weight is often redundantly encoded many times. 
Pre-encoding all possible weight–layout combinations would lead to prohibitive memory usage, while encoding weights on-demand reduces memory usage but introduces heavy computational cost.
To mitigate this overhead, we restrict weights to a small, fixed set of representative values.
Each representative is encoded once into a plaintext codebook, and during inference, the appropriate plaintext is retrieved rather than recomputed. 
Consequently, both runtime and memory scale with the codebook size rather than with the total number of weights. 
Clustering provides a mechanism for constructing such a representative set by grouping similar weights and replacing each with its nearest \textit{centroid}, thereby bounding the number of unique encodings.

Formally, consider a polynomial network with $L$ convolution layers. 
The weight tensor of layer $l$ is denoted by $W^{(l)} \in \mathbb{R}^{O_l \times I_l \times H_l \times W_l}$, where $O_l$ is the number of output channels, $I_l$ the number of input channels, and $H_l \times W_l$ the kernel size. 
Flattening all convolution weights gives
$
\theta = (\theta_1, \dots, \theta_P) \in \mathbb{R}^P,
$
where $P$ is the total number of convolution parameters. 
To compress the model, each $\theta_j$ is replaced by its nearest codebook value from $\mathcal{C} = \{c_1, \dots, c_k\} \subset \mathbb{R}$, where $k \ll P$ is the number of representatives in the codebook.
The codebook entries act as \textit{centroids} approximating the original weights.
The choice of $k$ determines the efficiency-accuracy tradeoff: smaller $k$ reduces the number of encodings but increases approximation error, while larger $k$ produces finer approximation at higher computational and memory cost.
Given a distance function $d: \mathbb{R} \times \mathbb{R} \to \mathbb{R}$ (assumed to be non-negative), each parameter is quantized as:
$
\tilde{\theta}_j = \arg\min_{c \in \mathcal{C}} d(\theta_j, c) \text{ for all } j = 1, \dots, P,
$
producing a quantized parameter vector $\tilde{\theta} \in \mathcal{C}^P$. 
Quantization is applied elementwise, so each convolutional kernel retains its shape while its values are drawn from the codebook.
The choice of clustering strategy determines how the codebook is constructed and how assignments are performed.
\vspace{2mm}

\noindent\textbf{\underline{Full Clustering.}}
A straightforward strategy is to cluster the entire parameter vector $\theta \in \mathbb{R}^P$ using a single global codebook $\mathcal{C}$.
All convolution parameters, regardless of layer or kernel position, are quantized to the same shared set of representative values.
This produces a uniform quantization scheme with simple implementation and a small codebook size.

\noindent\textit{Limitation.}
While full clustering reduces the number of distinct weight values, it does not eliminate redundant plaintext encodings. 
Each weight must still match the ciphertext layout and level of its layer, so the same scalar values appearing in different layers must be re-encoded.
Even within a single convolution layer, although all weights share the same level, weights in different filter columns map to different layouts and thus require separate encodings. 
Only weights in the same column across layer filters align in layout and level. 
We refer to such a group as a \textit{slice}, which can share plaintext encoding. 
Thus, full clustering reduces codebook size but fails to address redundancy arising from mismatched levels and layouts.\vspace{2mm}

\noindent\textbf{\underline{Slice Clustering.}}
To address the redundancy remaining under full clustering, we refine quantization at the slice level.
Since only weights within the same slice can share plaintext encodings, clustering is applied independently to each slice rather than across all parameters. 
For a given layer $l$ and spatial position $s \in \{1, \dots, W_l\}$, we define the slice
$
S^{(l)}_s = \{ W^{(l)}_{o, i, h, s} | o = 1, \dots, O_l;\ i = 1, \dots, I_l;\ h = 1, \dots, H_l \},
$
which can be viewed as a vector in $\mathbb{R}^{O_l I_l H_l}$. 
Each slice $S^{(l)}_s$ is clustered independently using its own codebook $\mathcal{C}^{(l)}_s \subset \mathbb{R}$, and quantization is applied elementwise:
$
\tilde{W}^{(l)}_{o, i, h, s} = \arg\min_{c \in \mathcal{C}^{(l)}_s} d(W^{(l)}_{o, i, h, s}, c).
$
This ensures clustering respects FHE structure while adapting to the local weight distribution of each slice. 
Figure~\ref{fig:slice_clustering} illustrates slice clustering for a single convolution layer.
Convolution filters are decomposed into vertical slices along the kernel width dimension $W_l$.
Each colored slice $S^{(l)}_s$ is clustered independently using its own codebook.
Each three-dimensional block represents a convolution filter associated with one output channel.
The three axes correspond to kernel height $H_l$ (vertical), kernel width $W_l$ (horizontal), and input channels $I_l$ (depth).
The collection of such filters forms the output channels $O_l$.
Slices are taken as vertical strips along the width dimension $W_l$, and the figure highlights them with distinct colors.
The blue strip corresponds to slice $S^{(l)}_1$, the yellow strip to slice $S^{(l)}_2$, and the pink strip to slice $S^{(l)}_3$. 
Each slice $S^{(l)}_s$ groups together all weights ${W^{(l)}_{o,i,h,s}}$ at a fixed width position $s$ across output channels $O_l$, input channels $I_l$, and kernel height $H_l$, and is clustered independently with its own codebook $\mathcal{C}^{(l)}_s$.
This illustration emphasizes that clustering is performed slice by slice, rather than across the entire filter volume.

\begin{figure}[!t]
    \centering
    \includegraphics[width=0.76\linewidth]{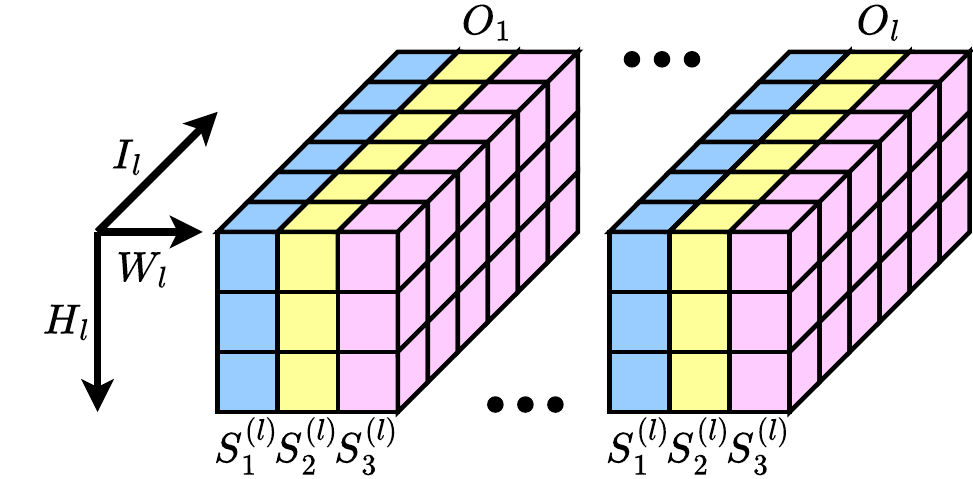}
    \caption{\small Illustration of single model slice clustering. Convolution filters are decomposed into slices along the kernel width, and each slice is clustered independently with its own codebook.}
    \label{fig:slice_clustering}
    \vspace{-2mm}
    \vspace{-4pt}
\end{figure}

\subsection{Ensemble of Polynomial Network}\label{sec:model_ensemble}

Inference with a single polynomial network $g(x)$ often shows high variance across training runs.  
Let $g_{(m)}(x) \in \mathbb{R}^K$ denote the logits by the $m$-th independently trained instance. 
For a fixed input, these logits can differ significantly, especially near decision boundaries, due to (1) approximation error introduced by the fixed-point polynomial activation $p_d(\cdot)$ and (2) training stochasticity such as random weight initialization and mini-batch ordering. 
As a result, different $g_{(m)}$ instances may predict different classes for the same input.
To reduce this variance, we use an ensemble of $M$ polynomial networks $\{g_{(m)}\}_{m=1}^M$. 
All ensemble members share the same architecture and activation $p_d(\cdot)$, but are trained independently with different random seeds and mini-batch orders. 
Given an input $x$, the ensemble output is defined as the average of logits:
$
\overline{g}(x) = \frac{1}{M} \sum_{m=1}^M g_{(m)}(x)
$
with the predicted class
$
\hat{y} = \arg\max_{k \in \{1, \dots, K\}} \overline{g}(x)_k.
$
Logit averaging smooths both training noise and polynomial approximation errors, producing more stable and accurate predictions. 
Each $g_{(m)}$ is trained using the same regularized loss $\mathcal{L}_{\mathcal{B}}$ (Eq.~\ref{eq:penalty_loss}), ensuring compatibility with fixed-point polynomial inference under FHE. 
Crucially, this ensemble design adds no additional computation or memory overhead.
As described in \S\ref{sec:data_layout}, our HW layout leaves a subset of ciphertext slots unused. 
We populate these slots with weights from different ensemble members, while all models share the same ciphertexts for inputs and activations.
This reuse of ciphertext–plaintext structures avoids redundant ciphertexts and repeated encodings, making ensemble inference practical within the FHE framework.\vspace{2mm}

\noindent \textit{Limits of Ensemble with Clustering.}
While ensemble inference reuses unused ciphertext slots, naively combining it with parameter clustering introduces a challenge.
In slice clustering, each slice $S_s^{(l)}$ containing $O_l \times I_l \times H_l$ weights is represented by $k$ centroids in codebook $\mathcal{C}_s^{(l)}$, reducing the number of plaintext encodings to $k$ per slice.
In an ensemble setting, however, each plaintext must encode parameters from all $M$ models.
Since centroids are unlikely to align across independently trained models, the number of distinct encodings per slice grows as $\mathcal{O}(k^M)$, negating the benefits of clustering and making the naive combination impractical.
\vspace{2mm}

\noindent\textbf{\underline{Slice Ensemble Clustering.}}
To address the inefficiency of independent clustering, we extend slice clustering to ensemble models by enforcing a shared set of representatives across all ensemble members. 
Specifically, we cluster weights jointly at the same kernel position so that all models use a common codebook.
Consider an ensemble of $M$ polynomial networks. 
For convolutional layer $l$, model $m$ has weights $W^{(l,m)} \in \mathbb{R}^{O_l \times I_l \times H_l \times W_l}$.
For each kernel width position $s \in \{1, \dots, W_l\}$, we extract slices:
$
S^{(l,m)}_s = \{ W^{(l,m)}_{o,i,h,s} | o = 1,\dots,O_l; i = 1,\dots,I_l; h = 1,\dots,H_l\}.
$
Stacking slices from all $M$ models produces
$
X^{(l)}_s = [ S^{(l,1)}_s \; S^{(l,2)}_s  \cdots  S^{(l,M)}_s ] \in \mathbb{R}^{N_s \times M},
$
where $N_s = O_l I_l H_l$. 
Each row of $X^{(l)}_s$ is an $M$-dimensional vector corresponding to the same weight coordinate across all models. 
We cluster these rows in $\mathbb{R}^M$ by solving:
$
\min_{\mathcal{C}^{(l)}_s \subset \mathbb{R}^M} 
\sum_{j=1}^{N_s} 
\min_{c \in \mathcal{C}^{(l)}_s} 
d( X^{(l)}_{s,j}, c),
$
where $\mathcal{C}^{(l)}_s = \{c_1, \dots, c_k\}$ is the shared codebook.
Each row $X^{(l)}_{s,j}$ is replaced by its nearest centroid, producing quantized slices 
$
\tilde{X}^{(l)}_{s,j} \in \mathcal{C}^{(l)}_s.
$
Quantized weights $\tilde{W}^{(l,m)}$ are then reconstructed by reshaping the corresponding slices. 
This shared clustering aligns weights across ensemble members, mapping small per-model variations at the same coordinate to a common centroid.
It thus requires fewer distinct encodings, reducing codebook size and inference time while retaining the accuracy gains of ensembling.
Figure~\ref{fig:slice_clustering_ensemble} illustrates how slice clustering is extended from a single model to an ensemble of polynomial networks.
\begin{figure}[!t]
    \centering
    \includegraphics[width=\linewidth]{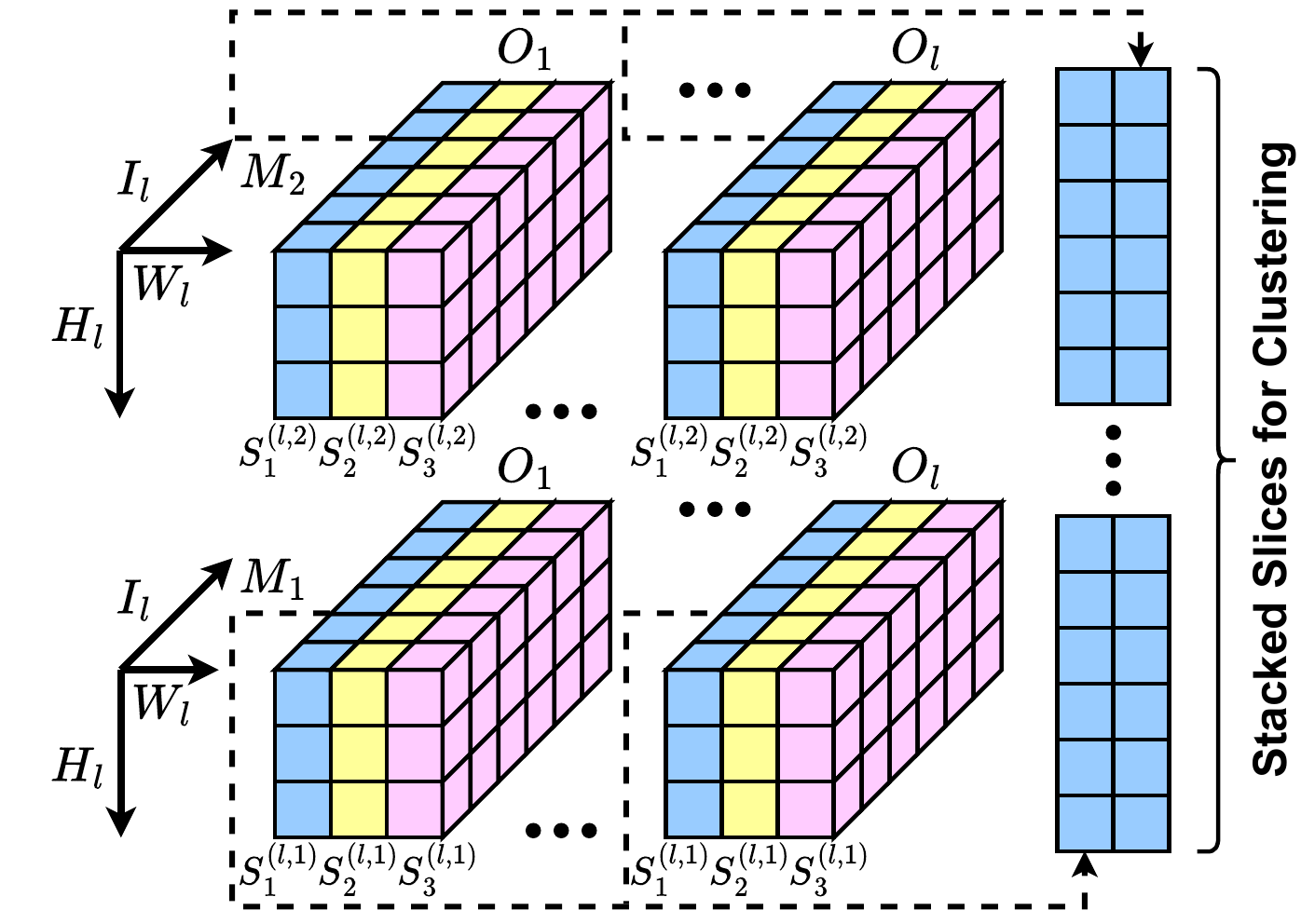}\vspace{-2mm}
    \caption{\small Illustration of slice-wise clustering for an ensemble. Slices at the same kernel width position from different models are stacked and clustered jointly using a shared codebook.}
    \label{fig:slice_clustering_ensemble}
    \vspace{-4pt}
\end{figure}
Each three-dimensional block in the figure represents the convolution filters of a single model instance, spanning kernel height $H_l$, kernel width $W_l$, and input channels $I_l$.
The collection of such filters across output channels $O_l$ forms the layer weights $W^{(l,m)}$ for model $m$.
For a fixed layer $l$, slices are extracted as vertical strips along the width dimension $W_l$, with colors indicating different slices $S^{(l,m)}_s$ within each model.
Unlike the single-model case, however, we now align slices from multiple models at the same width position $s$ and stack them together.
This produces a slice matrix $X^{(l)}_s \in \mathbb{R}^{N_s \times M}$, where each row corresponds to one weight coordinate across all $M$ models, and $N_s = O_l I_l H_l$.
Clustering is then performed jointly in $\mathbb{R}^M$ across models, producing a shared codebook $\mathcal{C}^{(l)}_s$ for slice $s$.
The illustration emphasizes that ensemble clustering is not applied to each model independently.
Instead, slices are stacked across models at the same spatial position, and clustering is performed jointly.
\section{Experimental Results}\label{sec:results}

\subsection{Evaluation of Training Methodology}\label{sec:eval_training}

\vspace{2mm}\noindent\underline{\textbf{Training Setup.}}
We evaluate our polynomial approximation strategy on the CIFAR-10, CIFAR-100, and Tiny-ImageNet (referred to as Tiny) datasets, which are widely adopted in related work on privacy-preserving neural network inference. 
Our experiments use ResNet-18, ResNet-20, and ResNet-32 architectures, and additionally include VGG-16 to assess generality beyond residual networks. 
Each experiment is repeated 10 times with different random seeds, and we report mean classification accuracy along with standard deviations. 
For model training (see~\S\ref{sec:model_approx_train}), we set the clipping parameter to $c = 2$ and the penalty strength to $\zeta = 0.001$; the impact of both hyperparameters is analyzed subsequently. 
We fix the fixed-point precision to $b = 10$, following PILLAR~\cite{pillar}, as it provides accurate polynomial approximation under quantization-aware fitting. 
For all clustering strategies, we use the $\ell_2$ norm as the distance metric and employ \textit{k}-means clustering~\cite{kmeans}.



\vspace{2mm}\noindent\underline{\textbf{Impact of Clipping Parameter ($c$).}}
We evaluate the effect of the clipping interval $[-c, c]$ on model performance. 
The choice of $c$ determines the trade-off between preserving the activation distribution and truncating extreme values. 
If $c$ is too small, informative activations are excessively clipped, reducing representational capacity; conversely, if $c$ is too large, the clipping mechanism provides little stabilization, allowing unstable activations to propagate.
Results are reported for ResNet-18 on CIFAR-10, which we present without loss of generality. 
Figure~\ref{fig:relu_inputs_hist} shows the distribution of pre-activation inputs to ReLU across the network.
The values are largely concentrated within the interval $[-2, 2]$, with only a small fraction outside this range, supporting the choice of restricting pre-activations to $[-2, 2]$. 
Figure~\ref{fig:diff_clip} shows the classification accuracy obtained under different clipping intervals. 
Among the tested ranges, $[-2, 2]$ consistently achieves the highest mean accuracy across runs. 
Smaller intervals degraded performance by discarding a substantial portion of activations, while larger intervals reduce stability and result in lower accuracy. 
This trend is consistent across other dataset–architecture combinations. 
Based on these observations, we adopt $[-2, 2]$ as the clipping range in all experiments.

\vspace{2mm}\noindent\underline{\textbf{Impact of Penalty Strength ($\zeta$).}}
We analyze the impact of the penalty strength $\zeta$ on model performance. 
The parameter $\zeta$ controls the magnitude of the regularization term that penalizes pre-activations outside the target interval $[-c, c]$. 
Figure~\ref{fig:penalty_ablation} reports results ResNet-18 on CIFAR-10.
The highest mean accuracy is obtained for $\zeta = 10^{-3}$. 
Smaller values fail to provide sufficient regularization and lead to degraded accuracy, whereas larger values reduce performance by overly constraining the model. 
This trend is consistent across other dataset–architecture combinations. 
Based on these observations, we set $\zeta = 10^{-3}$ in all experiments.

\begin{figure*}[!t]
    \centering
    \begin{subfigure}[t]{0.33\linewidth}
        \centering
        \includegraphics[width=\linewidth]{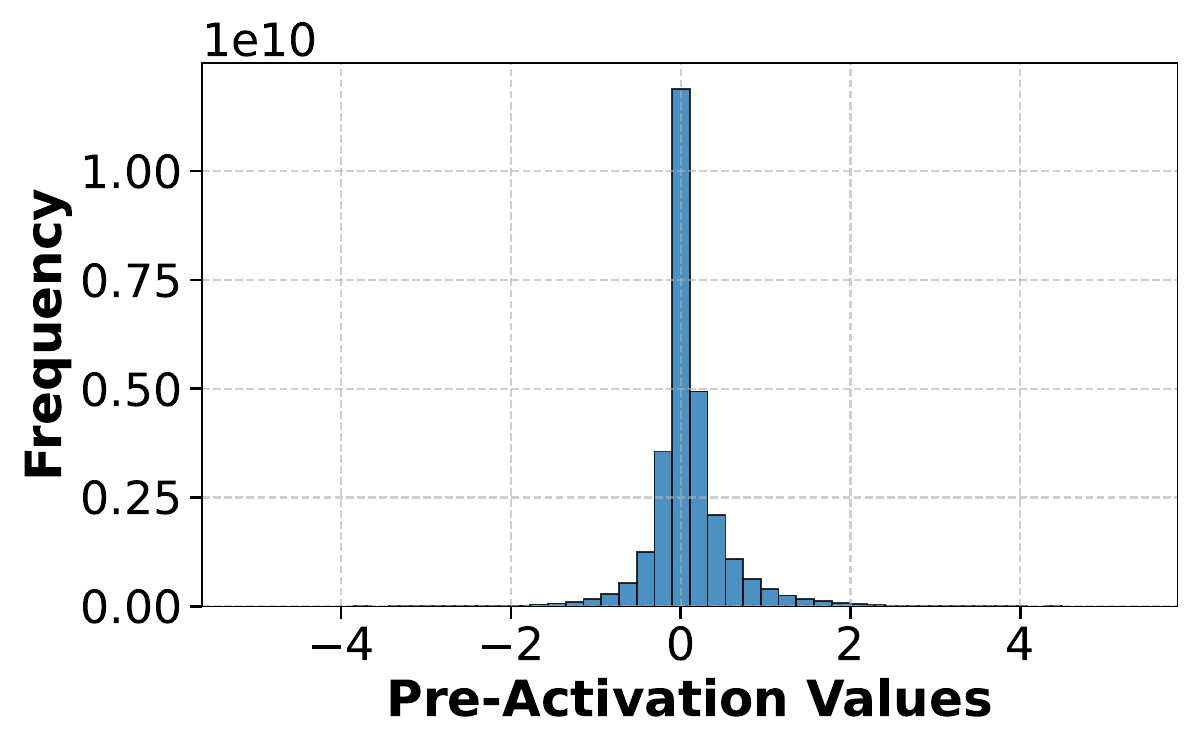}\vspace{-2mm}
        \caption{}
        \label{fig:relu_inputs_hist}
    \end{subfigure}
    \hfill
    \begin{subfigure}[t]{0.33\linewidth}
        \centering
        \includegraphics[width=\linewidth]{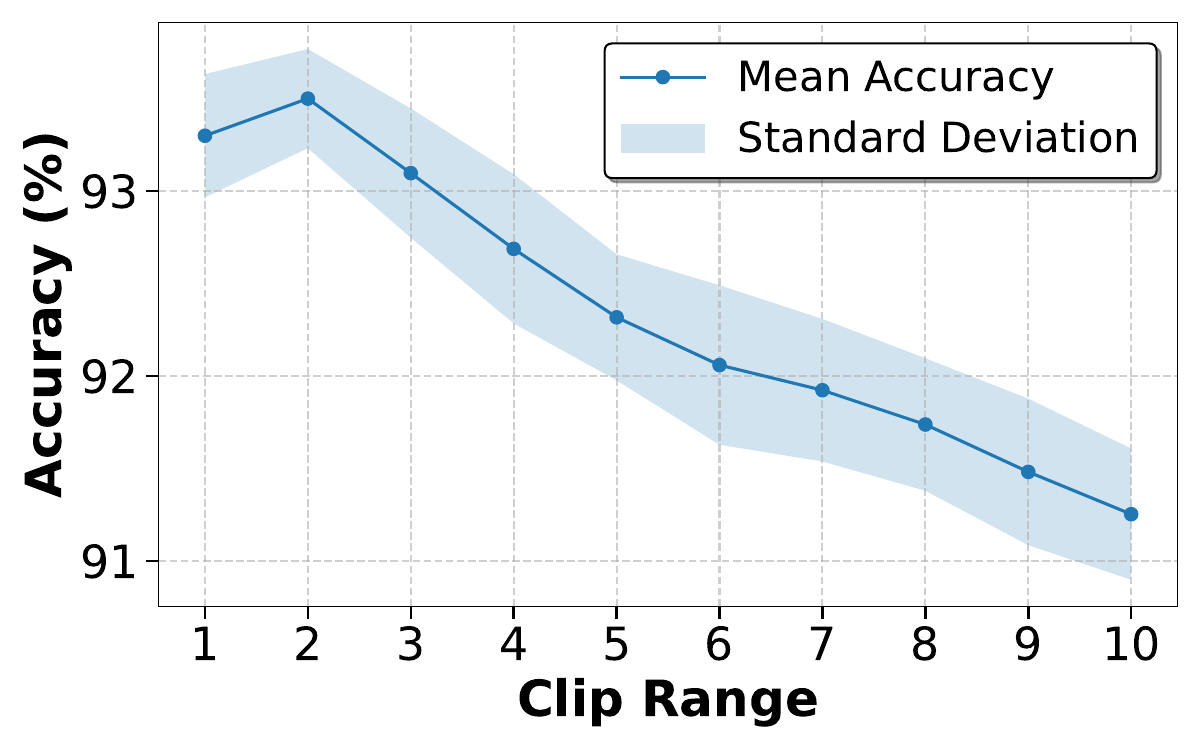}\vspace{-2mm}
        \caption{}
        \label{fig:diff_clip}
    \end{subfigure}
    \hfill
    \begin{subfigure}[t]{0.33\linewidth}
        \includegraphics[width=\linewidth]{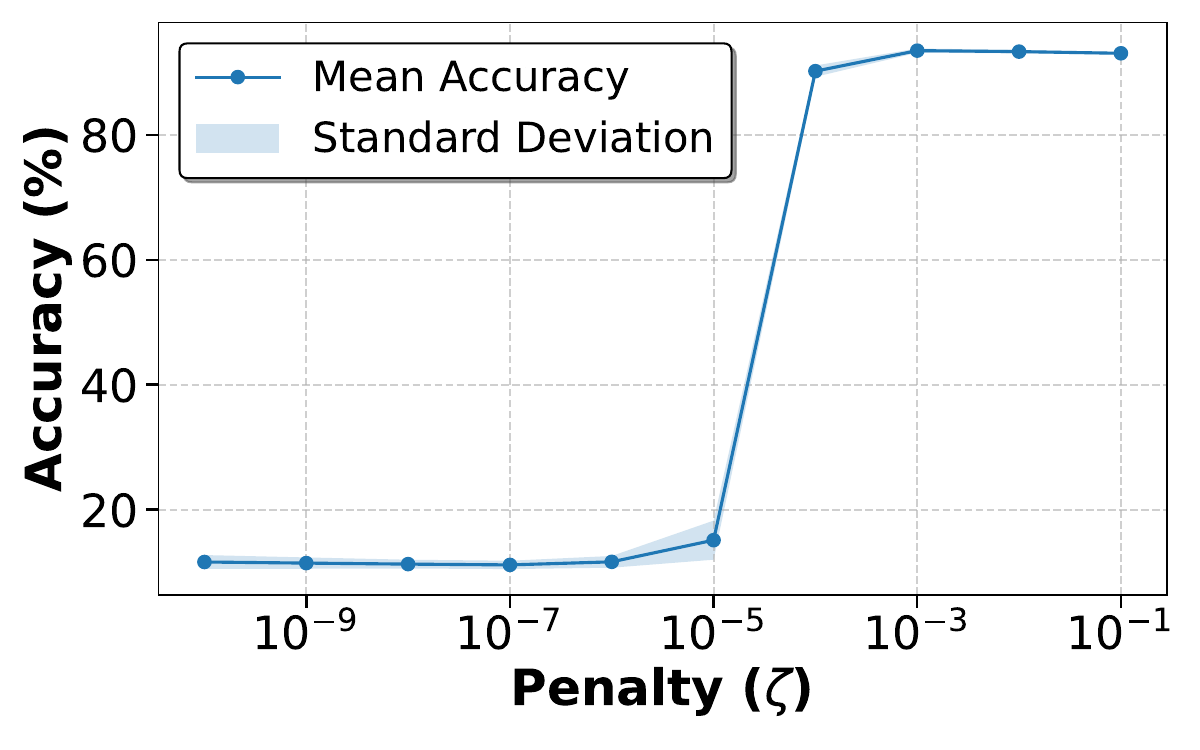}\vspace{-2mm}
        \caption{}
        \label{fig:penalty_ablation}
    \end{subfigure}
    \vspace{-2mm}
    \caption{\small 
    Effects of activation clipping and regularization for ResNet-18 on CIFAR-10. 
    (a)~Pre-Activation value distribution across all layers, showing concentration within [-2, 2]. 
    (b)~Accuracy across clipping ranges, with [-2, 2] performing best. 
    (c)~Accuracy versus penalty strength $\zeta$, with best performance at $\zeta = 10^{-3}$; smaller values under-regularize the model, while larger values overly constrain it and reduce accuracy.
    }
    \label{fig:ablation}
    \vspace{-4pt}
\end{figure*}



\noindent\underline{\textbf{Comparison to Related Work.}}
We compare the training methodology of our approach against PILLAR, which represents the lowest-degree polynomial-approximation method previously shown to train deep neural networks reliably.
We evaluate PILLAR using its open-source implementation~\cite{github_pillar}. 
PILLAR was originally designed in the context of interactive PPML, which differs from our non-interactive setting, and its repository does not provide any implementation for private inference.
For these reasons, we restrict the comparison to classification accuracy and do not evaluate inference time.
Table~\ref{tab:pillar-compare} reports classification accuracies (mean $\pm$ standard deviation) across all evaluated datasets and architectures
using ReLU, PAPER (our method), and two versions of PILLAR.
The original PILLAR
method employs a degree-4 polynomial activation.
For a fair comparison, we additionally adapt this implementation to use a degree-2 polynomial, allowing a direct comparison with PAPER at the same polynomial degree while keeping the original training procedure unchanged.
As shown in Table~\ref{tab:pillar-compare}, when averaged across all datasets and architectures, the adapted degree-2 variant of PILLAR exhibits a substantial degradation in accuracy, with an average drop of 7.99\% relative to ReLU and notably high variance, particularly on more challenging datasets such as CIFAR-100 and Tiny-ImageNet.
The original degree-4 variant of PILLAR is more stable but still underperforms ReLU, with an average accuracy gap of 6.54\%.
In contrast, PAPER with degree-2 polynomial activation remains much closer to ReLU, incurring an average accuracy drop of only 1.84\% across all datasets and architectures.
Importantly, PAPER demonstrates consistently stable training behavior across all evaluated settings, avoiding the large run-to-run fluctuations seen in both variants of PILLAR.
These results demonstrate that while PILLAR either degrades significantly when reduced to a degree-2 configuration or remains less accurate in its intended degree-4 configuration, PAPER enables accurate and stable training with quadratic polynomial activations.
Overall, these results indicate that the accuracy gap typically associated with low-degree polynomial approximations arises primarily from limitations in training methodology, and can be substantially mitigated through the regularization techniques introduced in PAPER.

\begin{table}[!t]
    \centering
    \caption{\small Classification accuracy (\%, mean $\pm$ standard deviation) on CIFAR and Tiny-ImageNet datasets for VGG and ResNet models using ReLU, PAPER (degree-2), and two versions of PILLAR.}
    \label{tab:pillar-compare}
    \vspace{-4pt}
    \resizebox{\linewidth}{!}{
    \begin{tabular}{c|c|c|c|c}
        \toprule
        \multirow{2}{*}{\textbf{Method}}  & \textbf{VGG-16} & \textbf{ResNet-18} & \textbf{ResNet-20} & \textbf{ResNet-32} \\
        \cmidrule{2-5}
         & \multicolumn{4}{c}{\textbf{CIFAR-10}} \\
        \midrule
        ReLU 
            & 92.57 $\pm$ 0.36
            & 94.57 $\pm$ 0.23 
            & 94.84 $\pm$ 0.25 
            & 95.23 $\pm$ 0.30 \\
        \textbf{PAPER}
            & 90.41 $\pm$ 0.46
            & 93.73 $\pm$ 0.27 
            & 94.06 $\pm$ 0.26 
            & 93.86 $\pm$ 0.24 \\
        PILLAR ($d=2$)
            & 88.32 $\pm$ 0.19
            & 91.78 $\pm$ 0.13 
            & 90.43 $\pm$ 0.40 
            & 90.40 $\pm$ 1.73 \\
        PILLAR ($d=4$)
            & 90.76 $\pm$ 0.18
            & 93.26 $\pm$ 0.15 
            & 90.98 $\pm$ 0.29 
            & 92.24 $\pm$ 0.27 \\
        \midrule
        \midrule
        \multirow{2}{*}{\textbf{Method}}  & \textbf{ResNet-18} & \textbf{ResNet-20} & \textbf{ResNet-32} & \textbf{ResNet-32} \\
        \cmidrule{2-5}
        & \multicolumn{3}{c|}{\textbf{CIFAR-100}} & \textbf{Tiny}\\
        \midrule
        ReLU 
            & 76.22 $\pm$ 0.44 
            & 74.96 $\pm$ 0.36 
            & 76.58 $\pm$ 0.40
            & 61.37 $\pm$ 0.36 \\
        \textbf{PAPER}
            & 74.96 $\pm$ 0.51 
            & 73.89 $\pm$ 0.37 
            & 73.93 $\pm$ 0.53
            & 56.78 $\pm$ 0.42 \\
        PILLAR ($d=2$) 
            & 71.92 $\pm$ 0.28 
            & 62.30 $\pm$ 3.96 
            & 66.38 $\pm$ 1.78
            & 40.85 $\pm$ 0.68\\
        PILLAR ($d=4$) 
            & 74.81 $\pm$ 0.22 
            & 62.55 $\pm$ 7.51 
            & 66.10 $\pm$ 1.12
            & 43.29 $\pm$ 0.83\\
        \bottomrule
    \end{tabular}}
    
    \vspace{0.5em}
    \footnotesize
    \textit{Note:} The original PILLAR method uses a degree-4 polynomial~\cite{pillar}; the degree-2 variant is our adaptation of its open-source code for comparison.
    \vspace{-4pt}
\end{table}

\subsection{Evaluation of Private Inference}\label{sec:private_inference}

\noindent\underline{\textbf{Private Inference Setup.}}
Private inference uses an in-house C++20 framework, built on Microsoft SEAL 4.1~\cite{seal} for leveled FHE operations and GMP 6.2.1 for multiprecision arithmetic.
The framework includes a tool to automatically convert trained models for C++ inference.
All experiments, including related work, were run on a machine with two AMD EPYC 64-core processors, 2 TB memory across 32 DDR4 DIMMs at 2933 MT/s, on Ubuntu 22.04.5 LTS.
For PAPER, we set $N = 2^{16}$ for ResNet-32 and $N = 2^{15}$ for the other models, as these are the smallest polynomial degrees enabling correct inference with $\approx$128-bit security.
Details on FHE encryption parameters are in \S\ref{sec:fhe_additional_details}. 
We compare our approach against state-of-the-art non-interactive PPML baselines, MPCNN~\cite{mpcnn} and AutoFHE~\cite{autofhe}, in terms of accuracy and inference time, using their publicly available implementations~\cite{github_autofhe, github_mpcnn} as is.
Both baselines implement ResNet-20 on CIFAR-10 and ResNet-32 on CIFAR-10 and CIFAR-100, all with $N = 2^{16}$.
We also train plaintext models using standard ReLU activations to serve as reference for accuracy.
Figure \ref{fig:fullpage_results} summarizes private inference time versus accuracy across several CNN models and datasets, comparing clustering methods, ensembles, and baselines.

\begin{figure*}
    \centering
    \includegraphics[width=0.845\linewidth]{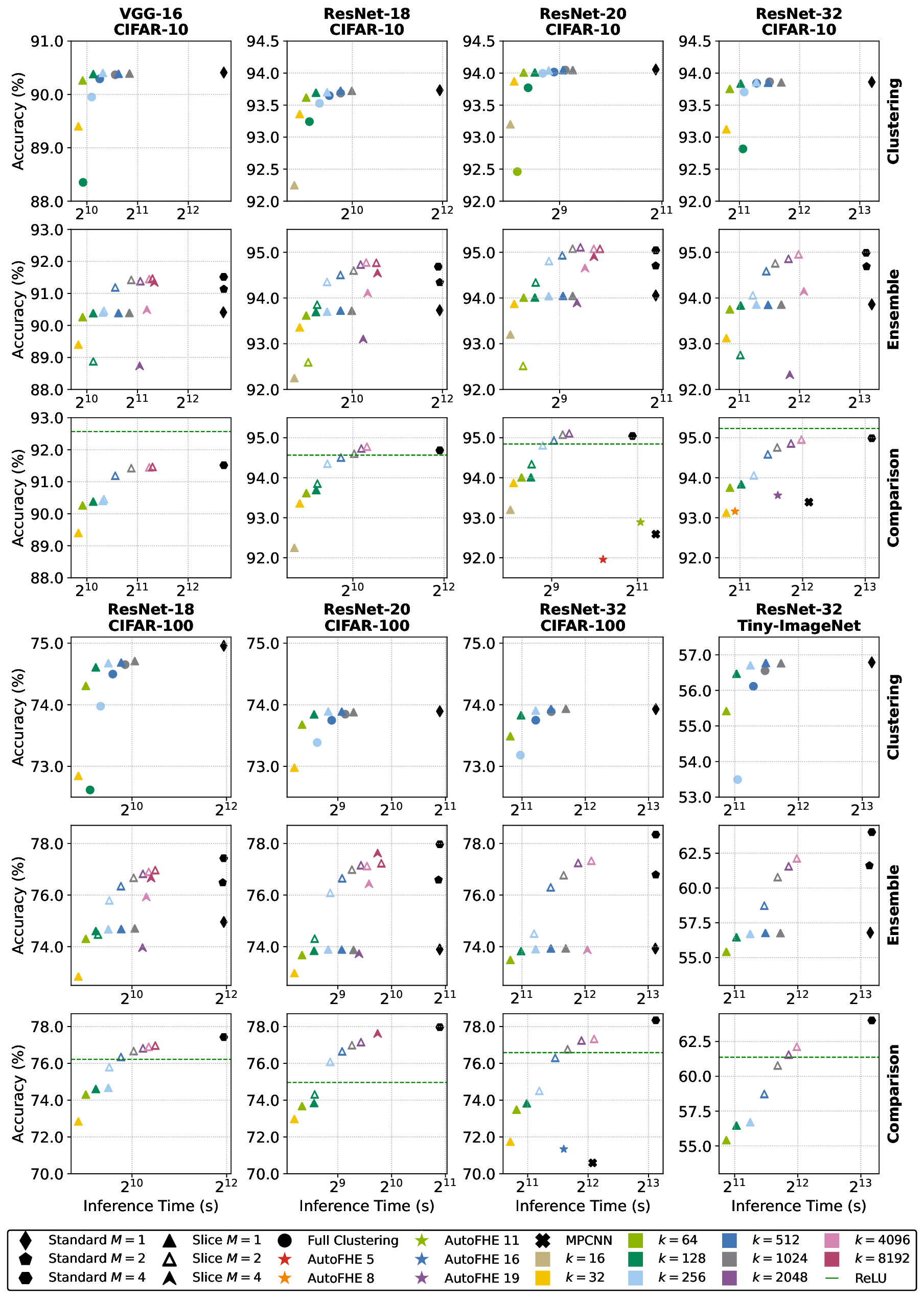}
    \caption{\small Accuracy vs. private inference time for VGG and ResNet models on CIFAR and Tiny-ImageNet datasets. Ensembles are evaluated with $M \in \{1, 2, 4\}$ ($M = 1$ is a single model). We compare Standard ensembles (no clustering), Slice ensembles (Slice Clustering), and Full Clustering. Colored squares represent centroid counts ($k = 16, \dots, 8192$), and the same color is used for the same centroid count in both slice and full clustering. Baselines include AutoFHE (numbers = bootstrapping layers) and MPCNN. The green line shows plaintext ReLU accuracy.}
    \label{fig:fullpage_results}
\end{figure*}

\vspace{2mm}\noindent\underline{\textbf{Analysis of Parameter Clustering.}}
Figure~\ref{fig:fullpage_results} (rows 1 and 4) shows accuracy and inference time for three configurations: \textit{Standard}, \textit{Full Clustering}, and \textit{Slice Clustering}.
\textit{Standard} applies techniques from \S\ref{sec:model_approx_train} and \S\ref{sec:struct_optmizations}, while \textit{Full Clustering} and \textit{Slice Clustering} extend them with parameter clustering described in \S\ref{sec:weight_clustering}.
Clustering reduces inference time several-fold compared to \textit{Standard}, while maintaining similar accuracy.
At very low centroid counts ($k$), accuracy degrades due to insufficient parameter representation.
Beyond a threshold (e.g., $k=64$ for \textit{Slice Clustering}), accuracy matches \textit{Standard} while providing 4-7$\times$ speedups.
For a fixed $k$, \textit{Full Clustering} is slightly faster, as its centroids are shared across slices and convolutional layers and thus are not fully utilized within individual slices.
By contrast, \textit{Slice Clustering} employs all centroids within each slice, incurring additional computational overhead from extra plaintext encoding, but providing finer granularity and improved parameter representation, resulting in higher accuracy.
In fact, \textit{Slice Clustering with $k=32$ achieves higher accuracy and faster inference than Full Clustering with $k=64$}.
Overall, \textit{Slice Clustering} is superior, surpassing \textit{Full Clustering} in both accuracy and latency.

\vspace{2mm}\noindent\underline{\textbf{Ensemble Evaluation.}}
In Figure~\ref{fig:fullpage_results} (rows 2 and 5), we evaluate ensembles with $M = \{1,2,4\}$ models.
The figure depicts accuracy and inference time for two approaches: \textit{Standard-$M$} (ensembles without clustering) and \textit{Slice-$M$} (ensembles with slice clustering as in \S\ref{sec:model_ensemble}).
Inference time does not increase with ensemble size since all models fit in one ciphertext, requiring the same number of homomorphic operations.
For \textit{Standard-$M\!$}, accuracy consistently improves as $M$ increases.
\textit{Slice-$M\!$}, however, does not always scale the same way. 
Since each centroid represents an $M$-dimensional space, higher $M$ can require more centroids to capture model parameters.
Thus, \textit{Slice-4} can underperform \textit{Slice-2} at a fixed $k$, and \textit{Slice-2} needs more centroids to match \textit{Standard-2} accuracy than either method with $M = 1$.
Despite this, \textit{Slice-$M$} is advantageous as it produces models with higher accuracy.
For instance, \textit{Slice-2} outperforms \textit{Standard-2} and reaches accuracy comparable to \textit{Standard-4}. 
This suggests clustering acts as an implicit regularizer that mitigates overfitting.

\vspace{2mm}\noindent\underline{\textbf{Accuracy of Polynomial Approximation.}}
Figure~\ref{fig:fullpage_results} (rows 3 and 6) presents Pareto fronts for \textit{Slice Clustering} in terms of accuracy and private inference time, and compares them against AutoFHE and MPCNN.
For reference, the figure also reports \textit{Standard-4} and plaintext ReLU accuracy.
Beyond the gains obtained from our training methodology, \textit{Slice Ensemble Clustering} (\S\ref{sec:model_ensemble}) further improves accuracy by ensembling multiple models within a single ciphertext.
For the simpler CIFAR-10 dataset, \textit{Slice Clustering} achieves accuracy ranging from 1.1\% below to 0.3\% above ReLU.
For the more complex CIFAR-100 and Tiny-ImageNet datasets, our approach consistently surpasses ReLU by 0.7–2.6\%.
These results demonstrate that \textit{our method enables the effective use of low-degree polynomial activations without sacrificing accuracy}.



\vspace{2mm}\noindent\underline{\textbf{Comparison to Related Work.}}
For ResNet-20, where our models operate with $N = 2^{15}$, \textit{Slice Clustering} achieves consistently faster inference than related work (up to $4 \times$).  
For ResNet-32, where all approaches use $N = 2^{16}$, our method is comparable in inference time to AutoFHE on CIFAR-10 and $2 \times$ faster on CIFAR-100.
Inference time does not change for our approach for different datasets, whereas AutoFHE slows down because it requires a high number of bootstrapping operations to maintain accuracy.
The main factors affecting our inference time relative to related work are:
(i) the polynomial activation (\S \ref{sec:model_approx_train}) and structural optimizations (\S \ref{sec:struct_optmizations}) reduce both multiplicative depth and number of RNS moduli, allowing execution under LFHE without bootstrapping;
(ii) our clustering technique (\S \ref{sec:weight_clustering}) further reduces the number of unique parameters, thereby limiting plaintext encodings per convolution;
(iii) for ResNet-20, we operate at a smaller polynomial degree, which directly accelerates homomorphic operations;
(iv) our choice of data layout (\S \ref{sec:data_layout}) deliberately sacrifices some latency to better exploit ciphertext slots for accuracy, but the above optimizations offset this overhead.
In terms of accuracy, our approach consistently outperforms related work thanks to our training strategy (\S \ref{sec:model_approx_train}).
In fact, it closely matches plaintext ReLU models, effectively eliminating the accuracy gap typically observed in polynomial approximations of CNNs.

\vspace{2mm}\noindent\underline{\textbf{Scalability.}}
The training methodology and optimization techniques exhibit scalability along multiple dimensions: they
(i) are applicable to different CNN architectures, as demonstrated with VGG and ResNet models;
(ii) extend to deeper networks, allowing more layers before bootstrapping, as shown with ResNet-32, whereas prior LFHE works were limited to AlexNet and VGG-11 \cite{sisyphus};
(iii) support datasets with both a small (CIFAR-10) and a large number of classes (CIFAR-100); and
(iv) accommodate higher-resolution images (Tiny-ImageNet) with no inference slowdown provided the number of ciphertexts for data encoding remains unchanged.
\section{Conclusion}
\label{sec:conclusions}
This work advances the practicality of PPML under FHE. 
We demonstrated that by integrating low-degree polynomial activations with structural and co-design optimizations, it is possible to execute large-scale models such as VGG and ResNet
under leveled FHE without bootstrapping, an achievement previously considered not possible. 
State-of-the-art methods typically incur a 2-5\% accuracy loss compared to plaintext models.
Our method is the first to close this gap, achieving accuracy on par with ReLU baselines while delivering up to 4$\times$ faster private inference on CIFAR and Tiny-ImageNet datasets.
These results mark an important step toward making PPML deployable in real-world applications, particularly in domains
where data confidentiality is non-negotiable. 
Future research will extend these methods to larger datasets and deeper architectures to further expand the practicality~of~PPML.

\cleardoublepage
\section*{Ethical Considerations}
Our work focuses on improving the computational efficiency and accuracy of privacy-preserving machine learning using leveled FHE. 
All experiments use publicly available benchmark datasets such as CIFAR-10, CIFAR-100, and Tiny-ImageNet. 
The research does not involve human subjects, personal data, or interactions with live systems. 
The primary stakeholders are end users who need strong protection for sensitive data, organizations that provide encrypted inference services, and the research community.
Our techniques reduce reliance on costly bootstrapping and allow deeper neural networks to run efficiently on encrypted inputs. 
This provides a clear societal benefit by making practical privacy-preserving machine learning more accessible. 
The method operates strictly within the security guarantees of the underlying cryptographic scheme and does not introduce new vulnerabilities or mechanisms that could enable data misuse or surveillance. 
It focuses solely on improving the performance of established secure computation methods. 
All implementation details and model configurations are included to support reproducibility. 
Because the research does not involve human subjects, sensitive information, or activities that require institutional oversight, no IRB review is required.


\bibliographystyle{plain}
\bibliography{references}

\appendix

\section{Why the Penalty Function Works?} \label{sec:proof_penalty}
\begin{lemma}[\textbf{Pre-activation Update Decomposition}]\label{lemma:1}
    Let $z_p^{(l)} \;=\; W^{(l)}\;h^{(l-1)}_{p_d}(x) \in \mathbb{R}^{n_l}$ denote the pre-activation vector at layer $l$ for input $x$. We define two quantities based on this vector: the clipping residual
    \begin{equation*}
    	d^{(l)} \;=\; z_p^{(l)} \;-\; \mathrm{clip}\left(z_p^{(l)}; [-c, c]\right)
    \end{equation*}
    which measures the amount by which $z_p^{(l)}$ exceeds the clipping range, and the gradient of the cross-entropy loss with respect to the pre-activation
    \begin{equation*}
    g^{(l)} \;=\; \frac{\partial}{\partial z_p^{(l)}}\ell_{\mathrm{CE}}\left(g(x), y\right)
    \end{equation*}
    which captures the sensitivity of the loss to changes in $z_p^{(l)}$.
    After a single gradient-descent update with learning rate $\eta$ on the minibatch loss $\mathcal{L}_{\mathcal{B}}$, the change in $z_p^{(l)}$ decomposes as
    \begin{equation*}
    \boxed
    {
        \Delta z_p^{(l)}=\underbrace{\left(-\eta\right)\left\|h^{(l-1)}_{p_d}(x)\right\|_2^2g^{(l)}}_{\displaystyle \Delta z_{\rm CE}^{(l)}}+\underbrace{\left(-\eta\right)\zeta\left\|h^{(l-1)}_{p_d}(x)\right\|_2^2\frac{d^{(l)}}{\|d^{(l)}\|_2}}_{\displaystyle\Delta z_{\rm pen}^{(l)}}
    }
    \end{equation*}
    Here, $\Delta z_{\rm CE}^{(l)}$ is the component induced by the cross‐entropy loss, while $\Delta z_{\rm pen}^{(l)}$ is the component induced by the clip‐range penalty.
\end{lemma}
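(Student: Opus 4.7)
The plan is to expand the gradient-descent update $W^{(l)} \mapsto W^{(l)} - \eta\,\nabla_{W^{(l)}}\mathcal{L}_{\mathcal{B}}$ and then push the first-order change in $W^{(l)}$ through the linear map $W^{(l)} \mapsto W^{(l)}\,h^{(l-1)}_{p_d}(x)$ that produces the pre-activation. Since $z_p^{(l)} = W^{(l)}\,h^{(l-1)}_{p_d}(x)$ is linear in $W^{(l)}$ with $h^{(l-1)}_{p_d}(x)$ held fixed, the induced change is exactly $\Delta z_p^{(l)} = (\Delta W^{(l)})\,h^{(l-1)}_{p_d}(x)$, with no higher-order remainder in that factor. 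It is therefore enough to evaluate $\nabla_{W^{(l)}}\ell_{\mathrm{CE}}$ and $\nabla_{W^{(l)}} \|d^{(l)}\|_2$ separately and then right-contract each gradient matrix with $h^{(l-1)}_{p_d}(x)$.

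For the cross-entropy term, the standard chain-rule identity for linear layers yields $\nabla_{W^{(l)}}\ell_{\mathrm{CE}} = g^{(l)}\,\bigl(h^{(l-1)}_{p_d}(x)\bigr)^{\!\top}$. Right-multiplying this rank-one outer product by $h^{(l-1)}_{p_d}(x)$ collapses it into the scalar $\|h^{(l-1)}_{p_d}(x)\|_2^2$ times $g^{(l)}$; multiplying by $-\eta$ then gives $\Delta z_{\mathrm{CE}}^{(l)}$. For the penalty, the only new computation is $\nabla_{z_p^{(l)}}\|d^{(l)}\|_2$: working componentwise in $d^{(l)} = z_p^{(l)} - \mathrm{clip}(z_p^{(l)};[-c,c])$, the partial $\partial d_i^{(l)}/\partial z_{p,j}^{(l)}$ equals $\delta_{ij}$ when $|z_{p,i}^{(l)}|>c$ and $0$ otherwise, and since $d_i^{(l)}$ already vanishes on the complementary set, the indicator can be absorbed, so that $\nabla_{z_p^{(l)}}\|d^{(l)}\|_2 = d^{(l)}/\|d^{(l)}\|_2$. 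Chaining with $\partial z_p^{(l)}/\partial W^{(l)}$ and right-contracting with $h^{(l-1)}_{p_d}(x)$ once more extracts the $\|h^{(l-1)}_{p_d}(x)\|_2^2$ factor and, after scaling by $-\eta\zeta$, produces $\Delta z_{\mathrm{pen}}^{(l)}$. Summing the two pieces gives the boxed decomposition.

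The main obstacle I anticipate is the non-smoothness of $\mathrm{clip}(\cdot;[-c,c])$ at $\pm c$ together with the singularity of $d^{(l)}/\|d^{(l)}\|_2$ at $d^{(l)}=0$. I would handle these via a subgradient selection (taking the value $0$ at the two kinks, which lies in the Clarke subdifferential) and restrict the claim to the regime $d^{(l)}\neq 0$, which is precisely the regime where the penalty is designed to act. A secondary bookkeeping point is that the full minibatch loss carries a $1/|\mathcal{B}|$ factor on the CE term and a $1/L$ factor on the penalty sum, and penalties at deeper layers $l'>l$ also reach $W^{(l)}$ by back-propagating through the polynomial activations; the first can be absorbed into $\eta$ and $\zeta$, while the second produces additive vectors that back-propagate through the intermediate $p_d'(\cdot)$ factors and are generically not parallel to the local $d^{(l)}/\|d^{(l)}\|_2$ direction. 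I would state the lemma as isolating the two pointwise-local directions induced by the CE term and by the layer-$l$ penalty, exactly as the boxed formula reads, and flag the remaining cross-layer terms as higher-order in $\eta$ or absorbable into constants.
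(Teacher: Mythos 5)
Your proposal follows essentially the same route as the paper's proof: differentiate the per-sample loss with respect to $W^{(l)}$ via the chain rule, use the outer-product form $g^{(l)}\,[h^{(l-1)}_{p_d}(x)]^{\top}$ for the cross-entropy gradient and $\frac{d^{(l)}}{\|d^{(l)}\|_2}[h^{(l-1)}_{p_d}(x)]^{\top}$ for the penalty gradient, then exploit the linearity of $z_p^{(l)}$ in $W^{(l)}$ so that $\Delta z_p^{(l)} = \Delta W^{(l)} h^{(l-1)}_{p_d}(x)$ collapses each outer product into the scalar $\|h^{(l-1)}_{p_d}(x)\|_2^2$. The argument is correct, and your explicit caveats (subgradient selection at the kinks, restriction to $d^{(l)}\neq 0$, and the neglected $1/|\mathcal{B}|$, $1/L$, and cross-layer penalty contributions) are points the paper's proof silently assumes away rather than addresses.
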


\begin{proof}
    Consider a single training example $(x,y)$, where $x$ is the input and $y$ the corresponding true label.
    The total per-sample loss includes two terms:
    \begin{equation*}
        \ell(x) = \ell_{\mathrm{CE}}\left(g(x),y\right)\;+\;\zeta\;R\left(z_p^{(l)}\right), \quad R\left(z_p^{(l)}\right)\;=\;\left\|d^{(l)}\right\|_2
    \end{equation*}
    The first term is the standard cross‑entropy loss 
    from the network output $g(x)$.
    The second term is a layer-specific penalty that acts only on the pre-activations at layer $l$, penalizing values that fall outside the clipping interval. 
    A gradient-descent step with learning rate $\eta$ updates the weight matrix by\vspace{-1mm}
    \begin{equation*}
        \begin{aligned}
            \Delta W^{(l)}
            &\;=\; -\eta\;\frac{\partial}{\partial W^{(l)}}\;\ell(x) \\
            &\;=\; -\eta\left(\frac{\partial}{\partial W^{(l)}}\;\ell_{\mathrm{CE}}\left(g(x), y\right)
            \;+\; \zeta\;\frac{\partial}{\partial W^{(l)}}\;R\left(z_p^{(l)}\right)\right)
        \end{aligned}
    \end{equation*}
    Since $z_{p}^{(l)}$ depends linearly on the weights, its Jacobian with respect to $W^{(l)}$ is\vspace{-1mm}
    \begin{equation*}
        \frac{\partial}{\partial W^{(l)}}\;z_{p}^{(l)} = \left[h^{(l-1)}_{p_d}(x)\right]^{\top}.
    \end{equation*}
    Using the chain rule, the gradient of the cross-entropy term with respect to the weights becomes\vspace{-1mm}
    \begin{equation*}
        \begin{aligned}
            \frac{\partial}{\partial W^{(l)}}\;\ell_{\mathrm{CE}}\left(g(x),y\right)
            &\;=\; \frac{\partial}{\partial z_p^{(l)}}\;\ell_{\mathrm{CE}}\left(g(x),y\right)\;\frac{\partial}{\partial W^{(l)}}\;z_p^{(l)}\\
            &\;=\;g^{(l)}\;\left[h^{(l-1)}_{p_d}(x)\right]^{\top}
        \end{aligned}
    \end{equation*}
    The penalty function $R\left(z_p^{(l)}\right)\;=\;\left\|d^{(l)}\right\|_2$ is nonzero only when some elements lie outside $[-c, c]$. Its gradient with respect to $z_p^{(l)}$ is\vspace{-3mm}
    \begin{equation*}
        \frac{\partial}{\partial z_p^{(l)}}\,R\left(z_p^{(l)}\right)\;=\;
        \begin{cases}
            \dfrac{d^{(l)}}{\left\|d^{(l)}\right\|_2} & \text{if } \left\|d^{(l)}\right\|_2 > 0, \\
            0 & \text{if } \left\|d^{(l)}\right\|_2 = 0.
        \end{cases}
    \end{equation*}
    This expression accounts for the case where the clipping has no effect; that is, when $z_p^{(l)} \in [-c, c]$ elementwise, the clipping residual $d^{(l)}$ becomes zero, and the gradient vanishes accordingly. 
    Applying the chain rule again, we compute\vspace{-1mm}
    \begin{equation*}
        \begin{aligned}
            \frac{\partial}{\partial W^{(l)}}\;R\left(z_p^{(l)}\right)
            &\;=\;\frac{\partial}{\partial z_p^{(l)}}\;R\left(z_p^{(l)}\right)\;\frac{\partial}{\partial W^{(l)}}\;z_p^{(l)}\\
            &\;=\;\frac{d^{(l)}}{\left\|d^{(l)}\right\|_2}\;\left[h^{(l-1)}_{p_d}(x)\right]^{\top}
        \end{aligned}
    \end{equation*}
    Putting it all together, we obtain the total 
    update for $W^{(l)}$\vspace{-1mm}
    \begin{equation*}
        \Delta W^{(l)}\;=\;-\eta\left(g^{(l)}\;+\;\zeta\;\frac{d^{(l)}}{\left\|d^{(l)}\right\|_2}\right)\;\left[h^{(l-1)}_{p_d}(x)\right]^{\top}
    \end{equation*}
    By definition, the change in pre-activation $z_p^{(l)}$ after one gradient descent step is given by

    \begin{equation*}
        \begin{aligned}
            \Delta z_p^{(l)}
            &\;=\;\overline{z}_{p}^{(l)}\;-\;z_p^{(l)} \;=\; \overline{W}^{(l)}\;h^{(l-1)}_{p_d}(x)\;-\;W^{(l)}\;h^{(l-1)}_{p_d}(x)\\
            &\;=\;\left(W^{(l)}\;+\;\Delta W^{(l)}\right)\;h^{(l-1)}_{p_d}(x)\;-\;W^{(l)}\;h^{(l-1)}_{p_d}(x)\\
            &\;=\;\Delta W^{(l)}\;h^{(l-1)}_{p_d}(x)\\
        \end{aligned}
    \end{equation*}
    where $\overline{z}_{p}^{(l)}$ and $\overline{W}^{(l)}$ are updated pre-activations and weight matrix after one gradient-descent step. 
    Substituting the previously derived expression for $\Delta W^{(l)}$\vspace{-1mm}
    \begin{equation*}
        \begin{aligned}
            \Delta z_p^{(l)}
            &\;=\;-\eta\left(g^{(l)}\;+\;\zeta\;\frac{d^{(l)}}{\left\|d^{(l)}\right\|_2}\right)\;\left[h^{(l-1)}_{p_d}(x)\right]^{\top}\;h^{(l-1)}_{p_d}(x)\\
            &\;=\;-\eta\left(g^{(l)}\;+\;\zeta\;\frac{d^{(l)}}{\left\|d^{(l)}\right\|_2}\right)\;\left\|h^{(l-1)}_{p_d}(x)\right\|_2^2\\
            &\;=\;\underbrace{\left(-\eta\right)\left\|h^{(l-1)}_{p_d}(x)\right\|_2^2\;g^{(l)}}_{\displaystyle \Delta z_{\rm CE}^{(l)}}+\underbrace{\left(-\eta\right)\zeta\left\|h^{(l-1)}_{p_d}(x)\right\|_2^2\frac{d^{(l)}}{\|d^{(l)}\|_2}}_{\displaystyle\Delta z_{\rm pen}^{(l)}}
        \end{aligned}
    \end{equation*}
    we obtain the penalty components $\Delta z_{\rm CE}^{(l)}$ and $\Delta z_{\rm pen}^{(l)}$. 
\end{proof}

\begin{lemma}[\textbf{Clipping Gradient Pullback}]
    Let $\Delta z_{\rm pen}^{(l)}$ be the penalty-induced component of the pre-activation update from Lemma~\ref{lemma:1}, and $d^{(l)}$ be the clipping residual.
    Then the inner product between $\Delta z_{\rm pen}^{(l)}$ and $d^{(l)}$ satisfies
    \begin{equation*}
        \boxed{
            \left\langle \Delta z_{\rm pen}^{(l)},\;d^{(l)}\right\rangle = -\;\eta\;\zeta\;\left\|h^{(l-1)}_{p_d}(x)\right\|_{2}^{2}\;\left\|d^{(l)}\right\|_{2}\;<\;0
        }
    \end{equation*}
    whenever $\left\|d^{(l)}\right\|_2 \neq 0$. 
    Therefore, $\Delta z_{\mathrm{pen}}^{(l)}$ ``pulls back" each element of $z_p^{(l)}$ lying outside $[-c,c]$ towards the clipping interval.
\end{lemma}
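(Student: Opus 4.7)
The plan is to leverage the closed-form decomposition established in Lemma~\ref{lemma:1}, which already expresses $\Delta z_{\rm pen}^{(l)}$ as an explicit scalar multiple of $d^{(l)}/\|d^{(l)}\|_2$. Once that formula is in hand, the claimed equality is essentially a one-line calculation, so the bulk of the work is conceptual rather than computational: I need to verify the algebraic identity, argue strict negativity, and then justify in what sense a negative inner product corresponds to ``pulling $z_p^{(l)}$ back'' toward the clipping interval $[-c,c]$.

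First I would substitute the expression
\begin{equation*}
\Delta z_{\rm pen}^{(l)} \;=\; (-\eta)\,\zeta\,\left\|h^{(l-1)}_{p_d}(x)\right\|_2^2\,\frac{d^{(l)}}{\|d^{(l)}\|_2}
\end{equation*}
into $\langle \Delta z_{\rm pen}^{(l)},\,d^{(l)}\rangle$, factor the scalar prefactor out of the inner product, and use $\langle d^{(l)},\,d^{(l)}\rangle = \|d^{(l)}\|_2^2$. One factor of $\|d^{(l)}\|_2$ then cancels against the denominator, which directly gives the displayed identity. Strict negativity follows immediately since $\eta,\zeta > 0$, $\|h^{(l-1)}_{p_d}(x)\|_2^2 \geq 0$, and $\|d^{(l)}\|_2 > 0$ by hypothesis; the only degenerate sub-case is when $h^{(l-1)}_{p_d}(x)$ is the zero vector, in which case the penalty update vanishes and the inner product is trivially zero rather than strictly negative. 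I would flag this edge case explicitly rather than try to hide it.

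To justify the ``pullback'' interpretation I would unpack $d^{(l)}$ coordinatewise using the definition $d^{(l)}_i = z_{p,i}^{(l)} - \mathrm{clip}(z_{p,i}^{(l)};[-c,c])$. For each coordinate this residual is positive exactly when $z_{p,i}^{(l)} > c$, negative exactly when $z_{p,i}^{(l)} < -c$, and zero otherwise. Because $\Delta z_{\rm pen}^{(l)}$ is a strictly negative scalar multiple of $d^{(l)}$, its sign on every \emph{active} coordinate is opposite to the sign of $z_{p,i}^{(l)}$'s excess over the boundary, so to first order the gradient step reduces the magnitude of any escaped pre-activation while leaving in-range coordinates untouched. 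The negative inner product asserted by the lemma is then the direction-averaged manifestation of this coordinatewise contraction.

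The main obstacle I anticipate is not the algebra but making the geometric ``pullback'' claim fully rigorous: the $\ell_2$ penalty is non-smooth at $d^{(l)}=0$, and a single gradient step with large $\eta$ could in principle overshoot so that $z_p^{(l)}$ lands on the opposite side of $[-c,c]$. I would therefore present the lemma's statement as a first-order descent certificate for $\|d^{(l)}\|_2$ along the penalty direction, and defer any claim of monotone coordinatewise shrinkage to a sufficiently small-learning-rate regime where the sign of each $d_i^{(l)}$ cannot flip within one update.
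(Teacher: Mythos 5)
Your proposal is correct and follows essentially the same route as the paper's proof: substitute the closed form of $\Delta z_{\rm pen}^{(l)}$ from Lemma~\ref{lemma:1}, pull out the scalar prefactor, use $\langle d^{(l)}, d^{(l)}\rangle = \|d^{(l)}\|_2^2$ to cancel one factor of $\|d^{(l)}\|_2$, and conclude strict negativity from positivity of each remaining factor. Your extra caveats (the $h^{(l-1)}_{p_d}(x)=0$ degeneracy, which is in fact vacuous here since $z_p^{(l)}=W^{(l)}h^{(l-1)}_{p_d}(x)$ has no bias so $d^{(l)}$ would also vanish, and the possible overshoot for large $\eta$) are sensible refinements but do not change the argument.
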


\begin{proof}
    The clipping residual $d^{(l)}$ is defined as the difference between the pre‑activation $z_p^{(l)}$ and its clipped version. 
    $\Delta z_{\rm pen}^{(l)}$ is the component in $\Delta z_p^{(l)}$ induced by the clip-range penalty. 
    Taking the inner product of $\Delta z_{\rm pen}^{(l)}$ with $d^{(l)}$ gives\vspace{-1mm}
    
    \begin{equation*}
        \begin{aligned}
            \left\langle \Delta z_{\rm pen},\;d^{(l)}\right\rangle
            &\;=\;\left\langle-\;\eta\;\zeta\;\left\|h^{(l-1)}_{p_d}(x)\right\|_2^{2}\;\frac{d^{(l)}}{\left\|d^{(l)}\right\|_2},\;d^{(l)}\right\rangle \\
            &\;=\;-\eta\;\zeta\;\left\|h^{(l-1)}_{p_d}(x)\right\|_2^2\left\langle \frac{d^{(l)}}{\|d^{(l)}\|_2},\;d^{(l)}\right\rangle\\
            &\;=\;-\eta\;\zeta\;\left\|h^{(l-1)}_{p_d}(x)\right\|_2^{2}\;\frac{1}{\left\|d^{(l)}\right\|_2}\,\left\|d^{(l)}\right\|_2^{2}\\
            &\;=\;-\eta\;\zeta\;\left\|h^{(l-1)}_{p_d}(x)\right\|_{2}^{2}\;\left\|d^{(l)}\right\|_{2}
        \end{aligned}
    \end{equation*}
    Each factor on the right-hand side: $\eta$ (learning rate), $\zeta$ (regularization parameter), $\left\|h^{(l-1)}_{p_d}(x)\right\|_2^{2}$ (squared norm of the preceding hidden representation) and $\left\|d^{(l)}\right\|_2$ (magnitude of the clipping residual) are strictly positive whenever any pre‑activation lies outside $[-c,c]$. 
    Hence the entire product is strictly negative
    \begin{equation*}
        \left\langle \Delta z_{\rm pen}^{(l)},\;d^{(l)}\right\rangle\;<\;0
    \end{equation*}
    A negative inner product implies that $\Delta z_{\mathrm{pen}}^{(l)}$ points in the exact opposite direction of the clipping residual $d^{(l)}$. 
    Consequently, every out‑of‑range component of pre‑activation $z_p^{(l)}$ is pulled back towards the clipping boundary, while pre‑activation already within $[-c,c]$ experience no change.
\end{proof}
\section{Details on Structural Optimizations}

\subsection{Derivations of Node Fusing}
\label{sec:node_fusing_derivation}

Batch normalization in linear form is given by
$$
B(x) = b_1 x + b_0, \quad b_1 = \frac{\gamma}{\sigma}, \quad b_0 = \beta_b - b_1 \mu.
$$
A quadratic activation $P(x)$ serves as a polynomial substitute for ReLU, where $P(x)$ is given by
$$
P(x) = c_2 x^2 + c_1 x + c_0.
$$

\textbf{Case 1: $P(B(x)) \mapsto P(x)$.}
Start with $B(x) = b_1 x + b_0$.
$$
\begin{aligned}
P(B(x)) 
&= c_2 \big(B(x)\big)^2 + c_1 B(x) + c_0 \\
= &c_2 (b_1 x + b_0)^2 + c_1 (b_1 x + b_0) + c_0 \\
= &c_2 \big(b_1^2 x^2 + 2 b_1 b_0 x + b_0^2\big) + c_1 b_1 x + c_1 b_0 + c_0 \\
= &(c_2 b_1^2) x^2 + \big(2 c_2 b_1 b_0 + c_1 b_1\big) x + \big(c_2 b_0^2 + c_1 b_0 + c_0\big).
\end{aligned}
$$
Resulting polynomial coefficients
$$
p_2 = b_1^2 c_2,\quad
p_1 = b_1(2 b_0 c_2 + c_1),\quad
p_0 = b_0^2 c_2 + b_0 c_1 + c_0.
$$

\textbf{Case 2: $B(C(x)) \mapsto C(x)$.}
Consider a convolution $C(x) = \sum_i w_i x_i + \beta_c$. When batch normalization is applied, it can be rewritten as a rescaled convolution with updated weights and bias.
$$
\begin{aligned}
B(C(x)) 
&= b_1 C(x) + b_0 \\
&= b_1 \Big(\sum_i w_i x_i + \beta_c\Big) + b_0 \\
&= \sum_i (b_1 w_i) x_i + (b_1 \beta_c + b_0).
\end{aligned}
$$
The equivalent convolution $\sum_i \omega_i x_i + \alpha$ uses the following updated parameters
$$
\omega_i = b_1 w_i,\quad \alpha = b_1 \beta_c + b_0.
$$

\textbf{Case 3: $P(B_X(x)+B_Y(y)) \mapsto S(x,y)$.}
Define $B_X(x) = b_{X1} x + b_{X0}$ and $B_Y(y) = b_{Y1} y + b_{Y0}$.
Each of these represents a linearized batch normalization applied to one input variable.
Let
$
z = B_X(x) + B_Y(y) = b_{X1} x + b_{Y1} y + (b_{X0} + b_{Y0}).
$
Applying the activation
$$
\begin{aligned}
P(z) 
&= c_2 z^2 + c_1 z + c_0 \\
&= c_2 \big(b_{X1} x + b_{Y1} y + b_{X0} + b_{Y0}\big)^2\\
&\quad + c_1 \big(b_{X1} x + b_{Y1} y + b_{X0} + b_{Y0}\big) + c_0.
\end{aligned}
$$
Expanding the square term
$$
\begin{aligned}
z^2 
&= b_{X1}^2 x^2 + b_{Y1}^2 y^2 + 2 b_{X1} b_{Y1} xy 
 + 2 b_{X1}(b_{X0}+b_{Y0}) x \\
&\quad + 2 b_{Y1}(b_{X0}+b_{Y0}) y 
 + (b_{X0}+b_{Y0})^2.
\end{aligned}
$$
Substituting and grouping terms by monomials
$$
\begin{aligned}
P(z) 
&= \underbrace{c_2 b_{X1}^2}_{d_{X2}} x^2
 + \underbrace{c_2 b_{Y1}^2}_{d_{Y2}} y^2
 + \underbrace{2 c_2 b_{X1} b_{Y1}}_{d_{XY}} xy \\
&\quad + \underbrace{\big[b_{X1}\big(2 c_2(b_{X0}+b_{Y0}) + c_1\big)\big]}_{d_X} x\\
&\quad + \underbrace{\big[b_{Y1}\big(2 c_2(b_{X0}+b_{Y0}) + c_1\big)\big]}_{d_Y} y \\
&\quad + \underbrace{\big[c_2(b_{X0}+b_{Y0})^2 + c_1(b_{X0}+b_{Y0}) + c_0\big]}_{d_0}.
\end{aligned}
$$
Hence $S(x,y) = d_{X2} x^2 + d_{Y2} y^2 + d_{XY} xy + d_X x + d_Y y + d_0$ with the coefficients above.

\textbf{Case 4: $P(B_X(x)+y) \mapsto S(x,y)$.}
This is the identity shortcut case. 
Set $B_X(x) = b_{X1} x + b_{X0}$ and define the combined input as
$
z = B_X(x) + y = b_{X1} x + y + b_{X0}.
$
Applying the activation
$$
P(z) = c_2 z^2 + c_1 z + c_0.
$$
Expanding the square term
$$
z^2 = b_{X1}^2 x^2 + y^2 + 2 b_{X1} xy + 2 b_{X1} b_{X0} x + 2 b_{X0} y + b_{X0}^2.
$$
Substituting and grouping terms by monomials
$$
\begin{aligned}
P(z)
&= \underbrace{c_2 b_{X1}^2}_{d_{X2}} x^2
 + \underbrace{c_2}_{d_{Y2}} y^2
 + \underbrace{2 c_2 b_{X1}}_{d_{XY}} xy
 + \underbrace{\big[b_{X1}(2 c_2 b_{X0} + c_1)\big]}_{d_X} x\\
&\quad  + \underbrace{\big[2 c_2 b_{X0} + c_1\big]}_{d_Y} y
+ \underbrace{\big[c_2 b_{X0}^2 + c_1 b_{X0} + c_0\big]}_{d_0}.
\end{aligned}
$$
Hence $S(x,y) = d_{X2} x^2 + d_{Y2} y^2 + d_{XY} xy + d_X x + d_Y y + d_0$ with the coefficients above.

\subsection{Derivations of Weight Redistribution}
\label{sec:weight_redistribution_derivation}

\subsubsection{Update Forward}

\textbf{Donors.}
\textit{Average Pooling.}
Start with $\mu(x) = k^{-1} \sum_i{x_i}$ and its normalized form $\bar{\mu}(x) = \sum_i x_i$.
We must find $\upsilon$ such that $\upsilon \bar{\mu}(x) = \mu(x)$ is valid
$$
\begin{aligned}
    \upsilon \bar{\mu}(x) &= \mu(x) \\
    \upsilon \sum_i x_i &= k^{-1} \sum_i x_i \\
    \upsilon &= k^{-1}.
\end{aligned}
$$

\textit{Polynomial Functions.}
Let $P(x) = \sum_{i=0}^{d}{c_i x^i}$ and $\bar{P}(x) = x^d + \sum_{i=0}^{d-1}{\bar{c}_i x^i}$ be a degree-$d$ polynomial and its normalization, respectively.
For $\upsilon \bar{P}(x) = P(x)$ to hold we have $\upsilon$ given by
$$
\begin{aligned}
    \upsilon \bar{P}(x) &= P(x) \\
    \upsilon \Big(x^d + \sum_{i=0}^{d-1}{\bar{c}_i x^i}\Big) &= \sum_{i=0}^{d}{c_i x^i} \\
    \upsilon x^d &= c_d x^d \\
    \upsilon &= c_d.
\end{aligned}
$$
The normalized coefficients must satisfy
$$
\begin{aligned}
    \upsilon \bar{P}(x) &= P(x) \\
    \upsilon \Big(x^d + \sum_{i=0}^{d-1}{\bar{c}_i x^i}\Big) &= \sum_{i=0}^{d}{c_i x^i} \\
    \upsilon \bar{c}_i x^i &= c_i x^i \\
    \bar{c}_i &= c_i \upsilon^{-1} \quad \forall i \in \{0,\dots,d-1\}.
\end{aligned}
$$

\textit{Bivariate Polynomial.}
Consider the degree-$d$ bivariate polynomial $S(x,y) = \sum_{i=0}^{d}\sum_{j=0}^{d-i}{c_{ij} x^i y^j}$, and let $\bar{S}(x,y) = x^d + \sum_{i=0}^{d-1}\sum_{j=0}^{d-i}{\bar{c}_{ij}x^i y^j}$ be its normalization on $x$ (normalizing on $y$ is equivalent).
As in the univariate case, we require $\upsilon \bar{S}(x,y) = S(x,y)$.
Thus, $\upsilon$ is determined by  
$$
\begin{aligned}
    \upsilon \bar{S}(x,y) &= S(x,y) \\
    \upsilon \Big(x^d + \sum_{i=0}^{d-1}\sum_{j=0}^{d-i}{\bar{c}_{ij}x^i y^j}\Big) &= \sum_{i=0}^{d}\sum_{j=0}^{d-i}{c_{ij} x^i y^j} \\
    \upsilon x^d &= c_{d0} x^d \\
    \upsilon &= c_{d0}.
\end{aligned}
$$
The normalized coefficients are obtained through
$$
\begin{aligned}
    \upsilon \bar{S}(x,y) &= S(x,y) \\
    \upsilon \Big(x^d + \sum_{i=0}^{d-1}\sum_{j=0}^{d-i}{\bar{c}_{ij}x^i y^j}\Big) &= \sum_{i=0}^{d}\sum_{j=0}^{d-i}{c_{ij} x^i y^j} \\
    \upsilon \bar{c}_{ij} x^i y^j &= c_{ij} x^i y^j \\
    \bar{c}_{ij} &= c_{ij} \upsilon^{-1} \qquad 
    \begin{aligned}
    &\forall i \in \{0,\dots,d-1\},\\
    &\forall j \in \{0,\dots,d-i\}.
    \end{aligned}
\end{aligned}
$$

\textbf{Receivers.}
\textit{Kernel Functions.}
Let $K(x) = \sum_{i \in \mathcal{I}} w_i x_i + \beta$ be the original kernel function and $\bar{K}(x) = \sum_{i \in \mathcal{I}} \bar{w}_i x_i + \bar{\beta}$ its updated version.
To ensure model equivalence, it must hold that $\bar{K}(\upsilon^{-1} x) = K(x)$.
Therefore, the updated parameters are given as follows
$$
\begin{aligned}
    \bar{K}(\upsilon^{-1} x) &= K(x) \\
    \sum_{i \in \mathcal{I}}{\bar{w}_i \upsilon^{-1} x_i} + \bar{\beta} &= \sum_{i \in \mathcal{I}} w_i x_i + \beta \\
    \bar{w}_i \upsilon^{-1} x_i &= w_i x_i \\
    \bar{w}_i &= w_i \upsilon \quad \forall i \in \mathcal{I}, \quad \bar{\beta} = \beta.
\end{aligned}
$$

\textit{Polynomial Functions.}
For a degree-$d$ polynomial receiver $P(x) = \sum_{i=0}^{d}{c_i x^i}$, the coefficients of its updated version $\bar{P}(x) = \sum_{i=0}^{d}{\bar{c}_i x^i}$ must be chosen such that $\bar{P}(\upsilon^{-1} x) = P(x)$ holds
$$
\begin{aligned}
    \bar{P}(\upsilon^{-1} x) &= P(x) \\
    \sum_{i=0}^{d}{\bar{c}_i (\upsilon^{-1} x)^i} &= \sum_{i=0}^{d}{c_i x^i} \\
    \bar{c}_i \upsilon^{-i} x^i &= c_i x^i \\
    \bar{c}_i &= c_i \upsilon^i \quad \forall i \in \{0,\dots,d\}.
\end{aligned}
$$

\textit{Bivariate Polynomial.}
For a degree-$d$ bivariate polynomial $S(x,y) = \sum_{i=0}^{d}\sum_{j=0}^{d-i}{c_{ij} x^i y^j}$ and its updated version $\bar{S}(x,y) = \sum_{i=0}^{d}\sum_{j=0}^{d-i}{\bar{c}_{ij}x^i y^j}$,, normalizing with respect to $x$ requires that $\bar{S}(\upsilon^{-1} x,y) = S(x,y)$.
Consequently, the coefficients of $\bar{S}(\cdot)$ are
$$
\begin{aligned}
    \bar{S}(\upsilon^{-1} x,y) &= S(x,y) \\
    \sum_{i=0}^{d}\sum_{j=0}^{d-i}{\bar{c}_{ij}(\upsilon^{-1} x)^i y^j} &= \sum_{i=0}^{d}\sum_{j=0}^{d-i}{c_{ij} x^i y^j} \\
    \bar{c}_{ij} \upsilon^{-i} x^i y^j &= c_{ij} x^i y^j \\
    \bar{c}_{ij} &= c_{ij} \upsilon^i \qquad 
    \begin{aligned}
    &\forall i \in \{0,\dots,d\},\\
    &\forall j \in \{0,\dots,d-i\}.
    \end{aligned}    
\end{aligned}
$$

\subsubsection{Update Backward}

\textbf{Donors.}
\textit{Average Pooling.}
Start with $\mu(x) = k^{-1} \sum_i{x_i}$ and its normalization $\bar{\mu}(x) = \sum_i x_i$.
We must determine $\upsilon$ such that $\bar{\mu}(\upsilon x) = \mu(x)$ holds
$$
\begin{aligned}
    \bar{\mu}(\upsilon x) &= \mu(x) \\
    \sum_i \upsilon x_i &= k^{-1} \sum_i x_i \\
    \upsilon \sum_i x_i &= k^{-1} \sum_i x_i \\
    \upsilon &= k^{-1}.
\end{aligned}
$$

\textit{Polynomial Functions.}
For a degree-$d$ polynomial donor $P(\cdot)$, the equality $\bar{P}(\upsilon x) = P(x)$ must be satisfied, where $\bar{P}(\cdot)$ is the normalized version.
The appropriate $\upsilon$ is found from
$$
\begin{aligned}
    \bar{P}(\upsilon x) &= P(x) \\
    (\upsilon x)^d + \sum_{i=0}^{d-1}{\bar{c}_i (\upsilon x)^i} &= \sum_{i=0}^{d}{c_i x^i} \\
    \upsilon^d x^d &= c_d x^d \\
    \upsilon &= c_d^{1/d}.
\end{aligned}
$$
and the normalized coefficients are

$$
\begin{aligned}
    \bar{P}(\upsilon x) &= P(x) \\
    (\upsilon x)^d + \sum_{i=0}^{d-1}{\bar{c}_i (\upsilon x)^i} &= \sum_{i=0}^{d}{c_i x^i} \\
    \bar{c}_i \upsilon^i x^i &= c_i x^i \\
    \bar{c}_i &= c_i \upsilon^{-i} \quad \forall i \in \{0,\dots,d-1\}.
\end{aligned}
$$

\textit{Bivariate Polynomial.}
Assuming normalization along $x$, we require $\bar{S}(\upsilon x,y) = S(x,y)$ for the degree-$d$ bivariate polynomial $S(x,y)$ and its normalization $\bar{S}(x,y)$.
For that, $\upsilon$ is determined from
$$
\begin{aligned}
    \bar{S}(\upsilon x,y) &= S(x,y) \\
    (\upsilon x)^d + \sum_{i=0}^{d-1}\sum_{j=0}^{d-i}{\bar{c}_{ij}(\upsilon x)^i y^j} &= \sum_{i=0}^{d}\sum_{j=0}^{d-i}{c_{ij} x^i y^j} \\
    \upsilon^d x^d &= c_{d0} x^d \\
    \upsilon &= c_{d0}^{1/d}.
\end{aligned}
$$
with the normalized coefficients being
$$
\begin{aligned}
    \bar{S}(\upsilon x,y) &= S(x,y) \\
    (\upsilon x)^d + \sum_{i=0}^{d-1}\sum_{j=0}^{d-i}{\bar{c}_{ij}(\upsilon x)^i y^j} &= \sum_{i=0}^{d}\sum_{j=0}^{d-i}{c_{ij} x^i y^j} \\
    \bar{c}_{ij} \upsilon^i x^i y^j &= c_{ij} x^i y^j \\
    \bar{c}_{ij} &= c_{ij} \upsilon^{-i} \qquad 
    \begin{aligned}
    &\forall i \in \{0,\dots,d-1\},\\
    &\forall j \in \{0,\dots,d-i\}.
    \end{aligned}
\end{aligned}
$$

\textbf{Receivers.}
\textit{Kernel Functions.}
Let $K(x) = \sum_{i \in \mathcal{I}} w_i x_i + \beta$ and $\bar{K}(x) = \sum_{i \in \mathcal{I}} \bar{w}_i x_i + \bar{\beta}$ be a kernel function and its updated version.
To preserve model equivalence, we require $\bar{K}(x) = \upsilon K(x)$, thus
$$
\begin{aligned}
    \bar{K}(x) &= \upsilon K(x) \\
    \sum_{i \in \mathcal{I}}{\bar{w}_i x_i} + \bar{\beta} &= \upsilon \Big(\sum_{i \in \mathcal{I}} w_i x_i + \beta \Big) \\
    \bar{w}_i x_i &= \upsilon w_i x_i \\
    \bar{w}_i &= w_i \upsilon \quad \forall i \in \mathcal{I}, \quad \bar{\beta} = \beta \upsilon.
\end{aligned}
$$

\textit{Polynomial Functions.}
For a degree-$d$ polynomial receiver $P(x) = \sum_{i=0}^{d} c_i x^i$, the updated polynomial $\bar{P}(x) = \sum_{i=0}^{d} \bar{c}_i x^i$ must satisfy $\bar{P}(x) = \upsilon P(x)$.
This yields
$$
\begin{aligned}
    \bar{P}(x) &= \upsilon P(x) \\
    \sum_{i=0}^{d}{\bar{c}_i x^i} &= \upsilon \sum_{i=0}^{d}{c_i x^i} \\
    \bar{c}_i x^i &= \upsilon c_i x^i \\
    \bar{c}_i &= c_i \upsilon \quad \forall i \in \{0,\dots,d\}.
\end{aligned}
$$

\textit{Bivariate Polynomial.}
For a degree-$d$ bivariate polynomial $S(x,y) = \sum_{i=0}^{d}\sum_{j=0}^{d-i} c_{ij} x^i y^j$ and its updated version $\bar{S}(x,y) = \sum_{i=0}^{d}\sum_{j=0}^{d-i} \bar{c}_{ij} x^i y^j$, we require $\bar{S}(x,y) = \upsilon S(x,y)$.
Therefore
$$
\begin{aligned}
    \bar{S}(x,y) &= \upsilon S(x,y) \\
    \sum_{i=0}^{d}\sum_{j=0}^{d-i}{\bar{c}_{ij} x^i y^j} &= \upsilon \sum_{i=0}^{d}\sum_{j=0}^{d-i}{c_{ij} x^i y^j} \\
    \bar{c}_{ij} x^i y^j &= \upsilon c_{ij} x^i y^j \\
    \bar{c}_{ij} &= c_{ij} \upsilon \qquad 
    \begin{aligned}
    &\forall i \in \{0,\dots,d\},\\
    &\forall j \in \{0,\dots,d-i\}.
    \end{aligned}
\end{aligned}
$$
\section{Details on FHE Parameters}
\label{sec:fhe_additional_details}

For each model, we detail the encryption parameters used in our experiments.
Our framework automatically determines the RNS moduli according to the specified sizes $|q_i|$.
We chose the smallest values of $\Delta$ and $|q_i|$ that preserve accuracy during private inference and adjust $N$ for $\approx$128-bit security.
The moduli are ordered as $q_0, \dots, q_L, P$, where $q_0$ holds the output, $q_1, \ldots, q_L$ are for rescaling, and $P$ is for modulus expansion in relinearization.
FHE terminology is provided in \S \ref{sec:lfhe}.

\textbf{VGG-16:} $N = 2^{15}$, $\Delta = 2^{25}$, $\log_{2}{Q} = 6 \cdot 50 + 10 \cdot 49 + 26 + 54 = 870$.
Moduli: 
\{0x4000000120001, 0x3ffffffdf0001, 0x40000001b0001, 0x3ffffffd20001, 0x4000000270001, 0x3ffffffcd0001, 0x20000000b0001, 0x20000001a0001, 0x1fffffff50001, 0x1ffffffea0001, 0x20000003b0001, 0x1ffffffd40001, 0x1ffffffba0001, 0x20000005e0001, 0x1ffffffb40001, 0x20000006d0001, 0x3ff0001, 0x3fffffffd60001\}.

\textbf{ResNet-18:} $N = 2^{15}$, $\Delta = 2^{22}$, $\log_{2}{Q} = 18 \cdot 44 + 23 + 54 = 869$.
Moduli: 
\{0x100000020001, 0x100000050001, 0x100000090001, 0x1000000b0001, 0xfffffcf0001, 0x100000180001, 0x1000001a0001, 0xfffffc60001, 0x1000002c0001, 0xfffffb70001, 0x1000002d0001, 0x1000003c0001, 0xfffffb50001, 0x1000003e0001, 0xfffffaf0001, 0x100000480001, 0xfffffac0001, 0x100000570001, 0x820001, 0x3fffffffd60001\}.

\textbf{ResNet-20:} $N = 2^{15}$, $\Delta = 2^{21}$, $\log_{2}{Q} = 20 \cdot 42 + 22 + 44 = 906$.
Moduli: 
\{0x400000b0001, 0x3ffffe80001, 0x400002f0001, 0x3ffffd20001, 0x40000330001, 0x3ffffca0001, 0x40000390001, 0x3ffffc30001, 0x400003b0001, 0x3ffffbe0001, 0x400004d0001, 0x3ffff850001, 0x40000560001, 0x400005c0001, 0x3ffff7b0001, 0x400006c0001, 0x3ffff550001, 0x40000770001, 0x3ffff4f0001, 0x400007a0001, 0x390001, 0x100000020001\}.
    
\textbf{ResNet-32:} $N = 2^{16}$, $\Delta = 2^{26}$, $\log_{2}{Q} = 32 \cdot 52 + 27 + 54 = 1745$.
Moduli:
\{0x10000000060001, 0xffffffff00001, 0x10000000180001, 0xfffffffe40001, 0x10000000200001, 0xfffffffe20001, 0x100000003e0001, 0xfffffffbe0001, 0x10000000500001, 0xfffffffa60001, 0x100000006e0001, 0xfffffff820001, 0x100000007e0001, 0xfffffff480001, 0x10000000960001, 0xfffffff280001, 0x10000000c80001, 0x10000000d80001, 0xffffffed60001, 0x10000000ec0001, 0xffffffec40001, 0x10000000fc0001, 0xffffffeb00001, 0x100000010e0001, 0xffffffe9e0001, 0x10000001380001, 0xffffffe9a0001, 0x100000016a0001, 0xffffffe940001, 0x10000001bc0001, 0xffffffe6a0001, 0x10000001be0001, 0x8020001, 0x3fffffffd60001\}.




\end{document}